\definecolor{IndianRed}{rgb}{0.8,0.36,0.36}
\definecolor{DarkGreen}{rgb}{0,0.5,0}
\definecolor{DarkBlue}{rgb}{0,0,0.5}
\newcommand{\be}{\begin{equation}}
\newcommand{\ee}{\end{equation}}
\newcommand{\nn}{\nonumber}
\newcommand{\dt}[2][L]{\det\nolimits_{#1}\left[ #2 \right ]}
\newcommand{\de}[1]{\frac{\partial}{\partial #1}}
\newcommand{\stack}[2]{\begin{subarray}{c} #1 \\ #2 \end{subarray}}
\numberwithin{equation}{section}
\theoremstyle{plain}
\newtheorem{thm}{Theorem}[section]
\newtheorem{lm}[thm]{Lemma}
\newtheorem{prop}[thm]{Proposition}
\newtheorem*{cor}{Corollary}
\theoremstyle{definition}
\newtheorem{defn}{Definition}[section]
\theoremstyle{remark}
\newcommand{\lmref}[1]{Lemma~\ref{lm:#1}}
\newcommand{\propref}[1]{Proposition~\ref{prop:#1}}
\newcommand{\secref}[1]{Section~\ref{sec:#1}}
\newcommand{\apref}[1]{Appendix~\ref{ap:#1}}
\newcommand{\defref}[1]{Definition~\ref{def:#1}}
\newcommand{\dd}{{\rm d}\hspace{1pt}}
\newcommand{\e}{{\rm e}\hspace{1pt}}
\begin{document}
\hypersetup{%
    urlcolor=DarkBlue, 
    linkcolor=IndianRed, 
    citecolor=DarkGreen 
} 
\title{Separation of variables for symplectic characters}

\author{Jan de Gier\footnote{jdgier@unimelb.edu.au}~ and Anita Ponsaing\footnote{a.ponsaing@ms.unimelb.edu.au}
\bigskip\\
{\small\em
\begin{minipage}{0.9\textwidth}
\begin{center}
Department of Mathematics and Statistics, The University of Melbourne,VIC 3010, Australia
\end{center}
\end{minipage}
}
}

\date{\today}

\maketitle

\begin{abstract}
We perform separation of variables for the symplectic Weyl character using Sklyanin's scheme. Viewing the characters as eigenfunctions of a quantum integrable system, we explicitly construct the separating operator using the Q-operator method. We also construct the inverse of the separating operator, as well as the factorised Hamiltonian.
\end{abstract}

\section{Introduction}
Symplectic characters, or Schur polynomials for the root system of type $C$, play an important role in the representation theory of finite groups \cite{FultonH91}. They are also used as generating functions for counting problems with boundaries in enumerative combinatorics, see e.g. \cite{CiucuK09,Okada06}. In recent years symplectic characters have appeared in square lattice critical bond percolation on lattice strips with boundaries \cite{DFZJ07,ZJ07,dGPS09,Cantini09}, with applications to the spin quantum Hall effect \cite{dGNP10}. In these cases, they arise from polynomial solutions of the $q$-deformed Knizhnik-Zamolodchikov equation \cite{FrenR92,Smirn92,KasaT07,Kasa08} where $q$ is the third root of unity, and appear in the exact finite size expressions for normalisations and correlation functions.
 
For these applications, one naturally would like to understand the asymptotic behaviour of the symplectic character as the number of variables tends to infinity, and in particular in the limit when all but a finite number of variables are set to $1$. Okounkov and Olshanki studied this limit for both type $A$ Jack polynomials and type $BC$ orthogonal polynomials \cite{OkouO98,OkouO06}. For both of these problems, the authors considered polynomials with a degree growing linearly with the number of variables. In contrast, the symplectic characters which arise in the aforementioned bond percolation models have a degree that is quadratic in the number of variables and, as far as we are aware, their asymptotic behaviour is an open problem. 

In this paper we study a method for the separation of variables (SoV) of symplectic characters. Our underlying motivation for studying this problem is to understand the asymptotic limit as described above. The aim of SoV is to produce a product of factors, one in each variable. Usingthis, the problem of the asymptotics will be reduced to that of finding the asymptotics for each factorised part. In order for this approach to be useful the SoV method must also be invertible, which is one of the main technical hurdles in SoV. 

Kuznetsov and Sklyanin \cite{KuzSkl06} described a method of SoV for symmetric functions, based on earlier work with Mangazeev on Jack polynomials of type $A$ in \cite{KuzMS03}. The method used by these authors is based on the $Q$-operator formalism for SoV initiated by Kuznetsov and Sklyanin \cite{KuzSkl98}. The $Q$-operator was first introduced by Baxter \cite{Bax82}, in his solution of the $8$-vertex model. 

Symmetric polynomials are eigenfunctions of certain multivariate differential operators, or Hamiltonians $\{H_i\}$, which form a quantum integrable system. The $Q$-operator for a quantum integrable system $\{H_i\}$ is a quantisation of the B\"acklund transformation for the corresponding classical integrable system. This connection was first found by Pasquier and Gaudin \cite{PasqG92}, who discovered a correspondence between the classical B\"acklund transformation and the $Q$-operator for the periodic Toda lattice. The operator $Q$ can be realised as an integral operator with a simple kernel. In the quasi-classical limit, this kernel turns into the generating function of the canonical transform \cite{KuzSkl98,PasqG92}.

A number of examples are given in \cite{KuzSkl06} for the application of the $Q$-operator method to systems of symmetric polynomials associated to the root system of type $A$. In particular, \cite{KuzSkl06} discusses SoV for Schur functions, being a special case of Jack polynomials. As indicated above, this paper aims to extend SoV using the $Q$-operator method to the irreducible character of the symplectic group, $\chi_\lambda$, which is the Schur polynomial for the root system of type $C$. SoV methods for root systems other than type $A$ have not been studied much in previous literature, the only case which we are aware of is for the open quantum $\rm{sl}(2,\mathbb{R})$ spin chain \cite{DerkachovKM03}. 

The key ingredient of the $Q$-operator method is the construction of the separating operator $\mathcal S$ whose action on a polynomial $P_\lambda$ is proportional to a product of single-variable polynomials,
\be
\label{eq:SPq}
\big(\mathcal S P_\lambda\big)(x_1,\ldots ,x_L) =P_\lambda(1,\ldots,1)\prod_{i=1}^L q_\lambda (x_i).
\ee
For the case of the symplectic character, as with Schur functions, $\mathcal S$ is invertible. The structure of the method is given in the next section.

\subsection{Notation}
Throughout the paper, $L$ is the number of variables. We use $\lambda$ to refer to an arbitrary partition of length $L$; i.e.\ an $L$-tuple of positive integers $(\lambda_1,\ldots,\lambda_L)$, which has the property that $\lambda_i\geq\lambda_{i+1}$. We also define the partition $\delta=(L,\ldots ,L-i+1,\ldots ,2,1)$, and use $\mu$ to denote the partition $\lambda+\delta$, so $\mu_i=\lambda_i+L-i+1$.

The symbol $\dt[k]{a}$ refers to the determinant of the matrix $a$ which is $k\times k$ in size. A bold letter such as $\mathbf{x}$ refers to the list of variables $x_1,\ldots,x_L$.

The usual symmetric group of size $L$ is given by $\mathcal S_L$, and the Weyl group for the type $B_L$ root system is given by $\mathcal W_L$.

\section{The $Q$-operator method}
\label{sec:method}
We consider a quantum integrable system with a commuting set of Hamiltonians, given by differential operators $H_i$ ($i\in\{1,\ldots,L\}$). These have eigenstates equal to the polynomial $P_\lambda$:
\be
\label{eq:hamil}
H_j P_\lambda (x_1,\ldots ,x_L)=h_j(\lambda )P_\lambda (x_1,\ldots ,x_L),
\ee
where the $h_j$ are the corresponding eigenvalues which depend on the multi-degree $\lambda$ of the polynomial.

The aim of the $Q$-operator method is to find a version of this spectral problem that involves a related factorised polynomial in place of $P_\lambda$. Acting on both sides of \eqref{eq:hamil} with $\mathcal S$, we have
\[ \mathcal SH_j\mathcal S^{-1}\ \big(\mathcal SP_\lambda\big) (x_1,\ldots ,x_L)=h_j(\lambda )\ \left(\mathcal SP_\lambda\right) (x_1,\ldots ,x_L),\]
and using $\eqref{eq:SPq}$,
\[ \left (\mathcal SH_j\mathcal S^{-1}\right ) \prod_{i=1}^L q_\lambda (x_i)=h_j(\lambda )\prod_{i=1}^L q_\lambda (x_i).\]

The main results of this paper are the explicit construction of the operators $\mathcal S$ and $\mathcal S^{-1}$ for the symplectic characters, as well as a ``factorised Hamiltonian'' $\widetilde H_j$, which acts in the same way as $\mathcal SH_j\mathcal S^{-1}$ on the factorised polynomial. $\widetilde H_j$ is not uniquely defined, and can be constructed using $q_\lambda$ as shown in the next section.

\subsection{Factorised Hamiltonian}
\label{sec:fachamil}
The operator $\widetilde H_j$ can be constructed from a single variable operator in the following way. If there exists a differential operator in $x_i$, denoted $W_i$, such that
\be
\label{eq:Wq0}
W_i q_\lambda (x_i)=0,\qquad [W_i\ , q_\lambda (x_k)]=0\quad (k\neq i),
\ee
then we construct the operator
\[ W_{i,j}=W_i+h_j(\lambda ) .\]
Now, noting that $W_{i,j}$ commutes with any function of $x_k$ where $k\neq i$, it is easy to see that
\be
\label{eq:Wqhq}
W_{i,j} \prod_{k=1}^L q_\lambda (x_k)=h_j(\lambda )\prod_{k=1}^L q_\lambda (x_k),\qquad \forall i,
\ee
and therefore any linear combination of the form
\[\widetilde H_j=\sum_{i=1}^L c_i W_{i,j},\qquad\qquad \sum_{i=1}^L c_i=1,\]
will also satisfy \eqref{eq:Wqhq}.

\subsection{Factorisation of the separating operator $\mathcal S$}
We will show that $\mathcal S$ can be written as a product of operators,
\be
\label{eq:SQQ}
\mathcal S= \big(\rho _0Q_{z_1}\ldots Q_{z_L}\big)\big|_{z_1=x_1,\ldots,z_L=x_L},
\ee
where the operator $Q_{z_i}$ is an integral operator that acts as
\be
\label{eq:QPqP}
\big(Q_{z_i}P_\lambda\big)(x_1,\ldots ,x_L)=q_\lambda (z_i) P_\lambda (x_1,\ldots ,x_L),
\ee
and $\rho_0$ sends $f(x_1,\ldots,x_L)\rightarrow f(1,\ldots,1)$. Furthermore, the $Q_{z_i}$ have the important properties
\[ \left[Q_{z_i},Q_{z_j}\right ]=0,\qquad \left[Q_{z_i},H_j\right ]=0,\qquad\forall i,j.\]
Having found the operator $Q_{z_i}$, we introduce new operators $\mathcal A_i$ for which
\[ \big(\rho_{i-1}Q_{z_i}\big)\big|_{z_i=x_i}=\mathcal A_i \rho _i.\]
Here, $\rho_j$ sends $f(x_1,\ldots,x_L)\rightarrow f(x_1,\ldots,x_j,1,\ldots,1)$. These new operators act as
\be
\label{eq:APqP}
\big(\mathcal A_iP_\lambda\big) (x_1,\ldots ,x_i,1,\ldots ,1)=q_\lambda (x_i)P_\lambda (x_1,\ldots ,x_{i-1},1\ldots ,1),
\ee
and \eqref{eq:SQQ} becomes the factorisation
\[ \mathcal S=\mathcal A_1\ldots\mathcal A_L.\]

\subsection{Summary}
Given a particular family of polynomials, the $Q$-operator method can be condensed into five steps:
\vspace{.3cm}\\
\begin{minipage}{\textwidth}
\begin{enumerate*}
\item{Specify $H_j$ and $h_j(\lambda)$ in \eqref{eq:hamil},}
\item{Specify $W_i$, verify \eqref{eq:Wq0}, and construct the factorised Hamiltonian,}
\item{Construct $\mathcal S^{-1}$,}
\item{Find $Q_z$ such that \eqref{eq:QPqP} is satisfied,}
\item{Find $\mathcal A_i$ such that \eqref{eq:APqP} is satisfied, and construct $\mathcal S$.}
\end{enumerate*}
\end{minipage}\vspace{.3cm}

The inverse separating operator is given before the construction of $\mathcal S$ for two reasons; firstly, for the symplectic character it turns out to be simpler in form, and secondly, the details relate to that of the differential operators $H$ and $W$ used in the first two steps.

As previously stated, this paper focuses on the case $P_\lambda=\chi_\lambda$, the symplectic character which will be defined in \secref{symp}. For this case, we closely follow \cite{KuzSkl06}, in which the factorisation of the Schur functions is treated. The technical detail in the symplectic case however is more elaborate.

The layout of this paper is as follows: \secref{symp} describes the symplectic character and some of its properties. The sections following proceed in the order of the steps above. \secref{invS} also describes the inverse of the operator $\mathcal S_k=\mathcal A_1\ldots\mathcal A_k$, which satisfies
\[ \mathcal S_k^{-1}\prod_{i=1}^k q_\lambda (x_i)=\frac{\chi_\lambda (x_1,\ldots ,x_k,1,\ldots ,1)}{\chi_\lambda (1,\ldots ,1)}.\]

\section{Properties of the symplectic character}
\label{sec:symp}

The polynomials we will consider are the irreducible characters of the symplectic group. These form a basis of $\mathbb C[x_1^{\pm},\ldots,x_L^{\pm}]^{\mathcal W_L}$, the ring of Laurent polynomials with complex coefficients which are symmetric in $x_i$ and invariant under $x_i\rightarrow x_i^{-1}$. Each character is labelled by a partition $\lambda=(\lambda_1,\ldots,\lambda_L)$. We refer to them as symplectic characters \cite{Weyl39,FultonH91}, and define them as follows,
\begin{defn}
\[ \chi _\lambda (x_1,\ldots ,x_L)=\frac{a_\mu (x_1,\ldots ,x_L)}{a_\delta (x_1,\ldots ,x_L)},\]
where
\be
\label{eq:a}
a_\mu (x_1,\ldots ,x_L)=\dt{x_i^{\mu _j}-x_i^{-\mu _j}},
\ee
and $\mu=\lambda+\delta$.
\end{defn}
The polynomials $a_\mu$ are Laurent polynomials in the ring $\mathbb C[x_1^{\pm},\ldots,x_L^{\pm}]$ which are antisymmetric in $(x_1,\ldots,x_L)$ and under $x_i\rightarrow x_i^{-1}$. These antisymmetries mean that $a_\mu(\mathbf{x})=0$ whenever $x_i=x_j$ or $x_i=1$. For the definition of $\mathcal A_i$, we will need an expression for $\chi_\lambda$ when some of the arguments are set to 1, so we must find an alternate definition for $\chi_\lambda$ in this limit. We therefore define a `truncated' version of $a_\mu$:
\begin{defn}
\be
\label{eq:atrunc}
a^{(k)}_\mu (x_1,\ldots ,x_{k-1})=\dt{\begin{array}{c}
\left[ x_i^{\mu _j}-x_i^{-\mu _j} \right ]_{i < k} \\[5mm]
\left[\mu _j^{2(L-i)+1}\right ]_{i\geq k}
\end{array}}.
\ee
\end{defn}

\begin{lm}
\label{lm:kprop}
For $i=k,\ldots,L$, set $x_i=\e^{\varepsilon u_i}$ with $u_i\in\mathbb R$. Then
\[
a^{(k)}_\mu(x_1,\ldots,x_{k-1}) \sim c_k(\varepsilon) a_\mu(x_1,\ldots,x_L)
\]
in the limit as $\varepsilon\rightarrow 0$. The prefactor $c_k(\varepsilon)$ does not depend on $\mu$.
\end{lm}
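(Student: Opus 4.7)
The plan is to Taylor-expand the bottom $N:=L-k+1$ rows of $a_\mu$ in $\varepsilon$, apply multilinearity of the determinant row-wise, and show that the leading contribution as $\varepsilon\to 0$ factorises as $a^{(k)}_\mu$ times a $\mu$-independent determinant.

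For each $i\geq k$, write
\[
x_i^{\mu_j}-x_i^{-\mu_j} \;=\; 2\sinh(\varepsilon u_i \mu_j) \;=\; \sum_{m\geq 0}\frac{2(\varepsilon u_i)^{2m+1}}{(2m+1)!}\,\mu_j^{2m+1}.
\]
Multilinearity in rows $k,\ldots,L$ of \eqref{eq:a} then gives
\[
a_\mu(\mathbf{x})\;=\;\sum_{m_k,\ldots,m_L\geq 0}\left(\prod_{i=k}^L\frac{2(\varepsilon u_i)^{2m_i+1}}{(2m_i+1)!}\right)\det\begin{pmatrix}\left[x_i^{\mu_j}-x_i^{-\mu_j}\right]_{i<k}\\[1mm] \left[\mu_j^{2m_i+1}\right]_{i\geq k}\end{pmatrix},
\]
and the inner determinant vanishes unless the $m_i$ are pairwise distinct. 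The minimum of $\sum_{i=k}^L(2m_i+1)$ subject to distinctness equals $N^2$, and is attained precisely when $\{m_k,\ldots,m_L\}=\{0,1,\ldots,N-1\}$; every other admissible choice contributes at order $\varepsilon^{N^2+2}$ or higher.

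At the leading order one sums over the $N!$ bijective assignments $m_i=\pi(i)$ with values in $\{0,\ldots,N-1\}$. For each, row reshuffling brings the bottom block into the canonical ordering $[\mu_j^{2(L-i)+1}]_{i\geq k}$ of \eqref{eq:atrunc}, producing $a^{(k)}_\mu(x_1,\ldots,x_{k-1})$ with the sign of $\pi$. Collecting these signed terms as a single determinant yields
\[
a_\mu(\mathbf{x})\;=\;\det\!\left[\frac{2(\varepsilon u_i)^{2(L-j)+1}}{(2(L-j)+1)!}\right]_{i,j=k,\ldots,L}\!\!a^{(k)}_\mu(x_1,\ldots,x_{k-1})\;+\;O(\varepsilon^{N^2+2}),
\]
so $c_k(\varepsilon)$ may be taken as the reciprocal of this $\mu$-independent determinant, establishing the claim. (As a sanity check, the remaining determinant factors out a Vandermonde in the $(\varepsilon u_i)^2$, consistent with the expected scaling $c_k(\varepsilon)\sim \varepsilon^{-N^2}$.)

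The main obstacle is purely a bookkeeping one: matching the sign of the row-reshuffling permutation in each term with the signed sum that defines the prefactor determinant, so that the $N!$ leading terms genuinely assemble into a single determinant rather than merely a determinant up to an overall sign. Once this is verified, the remainder of the argument is a straightforward multilinear expansion together with the observation that all subleading contributions carry an extra factor of at least $\varepsilon^2$.
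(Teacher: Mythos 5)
Your proof is correct and follows essentially the same route as the paper's: both expand $2\sinh(\varepsilon u_i\mu_j)$ in the bottom $L-k+1$ rows and extract the $\mu$-independent prefactor $\det\bigl[\tfrac{2(\varepsilon u_i)^{2(L-j)+1}}{(2(L-j)+1)!}\bigr]_{i,j\geq k}$ multiplying $a^{(k)}_\mu(x_1,\ldots,x_{k-1})$, the paper phrasing this as a factorisation of the truncated matrix into a product of two matrices rather than as a Leibniz-type sum over assignments. Your degree-counting argument (distinctness of the $m_i$ and minimality of $\sum(2m_i+1)=N^2$) is a welcome explicit justification for discarding the higher-order terms, a point the paper's truncation leaves implicit.
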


\begin{proof}
We first prove the above for the case where $k=L$. In the the $L$th row of the determinant $a_\mu(x_1,\ldots,x_{L-1},\e^{\varepsilon})$, the $j$th element is
\begin{align*}
\e^{\varepsilon \mu _j}-\e^{-\varepsilon \mu _j}&=2\sinh (\varepsilon \mu _j)\\
&=2\varepsilon \mu _j+ {\rm h.o.t.}
\end{align*}
Taking this to first order in $\varepsilon$, we can factor $2\varepsilon $ out of the determinant, leaving $\mu_j$ in the bottom row. Then
\[a^{(L)}_\mu(x_1,\ldots,x_{L-1})=\lim_{\varepsilon\rightarrow 0}\frac1{2\varepsilon }a_\mu(x_1,\ldots,x_{L-1},e^\varepsilon),\]
so $c_L(\varepsilon)=1/2\varepsilon $.

In the case of general $k$, the parameters $u_i$ allow us to take multiple arguments to $1$ along distinct trajectories. In the rows from $k$ to $L$ of $a_\mu(x_1,\ldots,x_{k-1},\e^{\varepsilon u_k},\ldots,\e^{\varepsilon u_L})$, we expand up to order $\varepsilon^{2(L-k)+1}$,
\[
2\sinh{(\varepsilon u_i \mu _j)}\approx \sum_{m=0}^{L-k}\frac{2(\varepsilon u_i \mu _j)^{2m+1}}{(2m+1)!}=\sum_{m=k}^{L}\frac{2(\varepsilon u_i \mu _j)^{2(L-m)+1}}{(2(L-m)+1)!}.
\]
The resulting matrix can be shown to be a product of two matrices. We use $\textbf{1}_n$ for the identity matrix of size $n$, and $\textbf{0}$ for a matrix of zeroes whose size is determined from context:
\begin{align*}
&\dt{\begin{array}{c}\left[ x_i^{\mu_j}-x_i^{-\mu_j}\right ] _{i<k}\vspace{.1cm}\\[4mm] \left[ \sum_{m=k}^L \frac{2(\varepsilon u_i\mu_j )^{2(L-m)+1}}{(2(L-m)+1)!}\right ] _{i\geq k}\end{array}}\\[4mm]
&=\det\nolimits_L{\left (\left[\begin{array}{cc}\textbf{1}_{k-1} & \textbf{0}\vspace{.1cm}\\[4mm] \textbf{0} & \left[ \frac{2(\varepsilon u_i)^{2(L-j)+1}}{(2(L-j)+1)!} \right ]_{i,j\geq k}\end{array}\right ]\left[\begin{array}{c}\left[ x_i^{\mu_j}-x_i^{-\mu_j}\right ] _{i<k}\\[4mm] \left[ \mu_j^{2(L-i)+1}\right ] _{i\geq k}\end{array}\right ]\right )}\\[4mm]
&=\frac{2^{L-k+1}\varepsilon^{(L-k+1)^2}}{\prod_{m=k}^{L}(2(L-m)+1)!}\ a^{(1)}_{(u_k,\ldots ,u_L)}\ a^{(k)}_\mu (x_1,\ldots ,x_{k-1}).
\end{align*}
This leads to the result
\begin{align}
\label{eq:aprop}
& a^{(k)}_\mu (x_1,\ldots ,x_{k-1})\\
&\qquad =\lim_{\varepsilon\rightarrow 0}\frac{\varepsilon^{-(L-k+1)^2}\prod_{m=0}^{L-k}(2m+1)!}{2^{L-k+1} a^{(1)}_{(u_k,\ldots ,u_L)}}a_\mu (x_1,\ldots, x_{k-1},\e^{\varepsilon u_k},\ldots ,\e^{\varepsilon u_L}),\nn
\end{align}
which means that proportionality factor $c_k(\varepsilon)$ is given by
\[c_k(\varepsilon)=\frac{\varepsilon^{-(L-k+1)^2}\prod_{m=0}^{L-k}(2m+1)!}{2^{L-k+1} a^{(1)}_{(u_k,\ldots ,u_L)}}.\qedhere\]
\end{proof}
We note that there is an alternative inductive proof similar to that used in \cite{KuzSkl06}, which we give in \apref{altproof}.
\begin{cor}
Because $c_k(\varepsilon)$ is independent of $\mu$, at the point $x_{k+1}=\ldots=x_L=1$ we have an alternate definition of $\chi$,
\begin{equation}
\label{eq:chitrunc}
\chi _\lambda (x_1,\ldots ,x_k,1,\ldots ,1)=\frac{ a^{(k+1)}_\mu (x_1,\ldots ,x_k)}{a^{(k+1)}_\delta (x_1,\ldots ,x_k)}.
\end{equation}
\end{cor}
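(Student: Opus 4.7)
The plan is to exploit the fact that the proportionality constant $c_k(\varepsilon)$ in \lmref{kprop} is independent of $\mu$, so it cancels in any ratio $a_\mu/a_\nu$ of two such determinants evaluated at the same arguments. Concretely, I would start from the definition $\chi_\lambda(\mathbf{x}) = a_\mu(\mathbf{x})/a_\delta(\mathbf{x})$, evaluated at the specialised point $(x_1,\ldots,x_k,\e^{\varepsilon u_{k+1}},\ldots,\e^{\varepsilon u_L})$ with auxiliary parameters $u_{k+1},\ldots,u_L\in\mathbb R$. Since $\chi_\lambda$ is a Laurent polynomial in its arguments, it is continuous, so its value at $(x_1,\ldots,x_k,1,\ldots,1)$ equals the $\varepsilon\to 0$ limit of the specialised expression.

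Next I would apply \lmref{kprop} with $k$ replaced by $k+1$ to both the numerator and the denominator. This yields
\[
a_\mu(x_1,\ldots,x_k,\e^{\varepsilon u_{k+1}},\ldots,\e^{\varepsilon u_L}) \sim c_{k+1}(\varepsilon)^{-1}\, a^{(k+1)}_\mu(x_1,\ldots,x_k),
\]
and the analogous relation for $\mu\to\delta$ (the application to $\delta$ is legitimate because $\delta$ is itself of the form $\lambda+\delta$ with $\lambda=(0,\ldots,0)$, so the lemma applies to $a_\delta$ as well). Taking the ratio, the factor $c_{k+1}(\varepsilon)$ cancels because it does not depend on $\mu$, and the limit $\varepsilon\to 0$ then produces
\[
\chi_\lambda(x_1,\ldots,x_k,1,\ldots,1) = \frac{a^{(k+1)}_\mu(x_1,\ldots,x_k)}{a^{(k+1)}_\delta(x_1,\ldots,x_k)}.
\]

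The only technical point to verify is that the denominator $a^{(k+1)}_\delta(x_1,\ldots,x_k)$ does not vanish identically, so that the cancellation of $c_{k+1}(\varepsilon)$ is meaningful rather than a $0/0$ manipulation. This is automatic from \lmref{kprop} applied to $\delta$, since $a_\delta(\mathbf{x})$ is not identically zero as a function on the open set where the $x_i$ are distinct and bounded away from $\pm 1$, and $c_{k+1}(\varepsilon)$ is a nonzero scalar for each small $\varepsilon$. I do not anticipate any serious obstacle: this corollary is essentially a one-line consequence of \lmref{kprop} once one observes the $\mu$-independence of the prefactor, and the continuity of $\chi_\lambda$ as a Laurent polynomial secures the passage to the limit.
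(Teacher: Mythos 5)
Your proposal is correct and follows exactly the route the paper intends: the corollary is an immediate consequence of \lmref{kprop} applied (with $k\to k+1$) to both $a_\mu$ and $a_\delta$, with the $\mu$-independent prefactor $c_{k+1}(\varepsilon)$ cancelling in the ratio before the limit $\varepsilon\to 0$ is taken. Your additional remarks on the continuity of $\chi_\lambda$, the applicability of the lemma to $\delta=(0,\ldots,0)+\delta$, and the non-vanishing of $a^{(k+1)}_\delta$ are sound and merely make explicit what the paper leaves implicit.
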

This corollary allows us to directly calculate the symplectic character when some of its arguments are set to $1$.

\subsection{Factorised forms and homogeneous identities}
\label{sec:factor}
This section will list some useful identities and factorised expressions for $a_\mu$ and $\chi_\lambda$. First recall that the denominator of a Schur function is given by the Vandermonde determinant, which has the product form
\be
\label{eq:vdm}
\dt{x_i^{L-j}}=\prod_{1\leq m<n\leq L}(x_m-x_n).
\ee
The Weyl denominator formula for type $C$ gives us an analogous identity for the symplectic character \cite{FultonH91},
\be
\label{eq:ad}a_\delta (x_1,\ldots ,x_L)=\prod _{i=1}^L x_i^{i-L} \prod _{1\leq i<j\leq L}(x_i-x_j)\prod _{1\leq i\leq j\leq L}(x_i-x_j^{-1}).
\ee
We can also use \eqref{eq:vdm} to show
\begin{align}
\label{eq:am1}
a^{(1)}_\mu &=\dt{\mu _i^{2(L-j)+1}} =\prod_{i=1}^L\mu _i\prod_{1\leq i<j\leq L}(\mu _i^2-\mu _j^2).
\end{align}
This result and simple row expansion of \eqref{eq:atrunc} immediately leads to the following identity,
\be
\label{eq:am2}
a^{(2)}_\mu (z)=\sum_{k=1}^L(-1)^{k-1}\left (\prod_{i\neq k}\mu_i\prod_{\stack{1\leq i<j\leq L}{i,j\neq k}}(\mu_i^2 -\mu_j^2 )\right )(z^{\mu_k}-z^{-\mu_k}),
\ee
which will be useful later. Furthermore, from \eqref{eq:ad} and \eqref{eq:aprop}, it is easily shown that
\begin{align}
a^{(k)}_\delta (x_1,\ldots ,x_{k-1})&=\prod _{i=0}^{L-k}(2i+1)!\prod _{i=1}^{k-1}x_i^{i-L}(x_i-1)^{2(L-k+1)}\nonumber \\ &\times\prod _{1\leq i<j\leq k-1}(x_i-x_j)\prod _{1\leq i\leq j\leq k-1}(x_i-x_j^{-1}).
\label{eq:adk}
\end{align}
In particular we have
\be
\label{eq:ad1}
a^{(1)}_\delta=\prod_{i=1}^{L-1}(2i+1)!\ ,
\ee
which, along with \eqref{eq:am1}, leads to
\be
\chi_\lambda(1,\ldots,1)=\prod_{i=1}^L\frac{\mu_i}{(2i-1)!}\prod_{1\leq i<j\leq L}(\mu _i^2-\mu _j^2),
\ee
which is Weyl's dimension formula for the symplectic group \cite{Weyl39,FultonH91}. 

\section{Hamiltonians and eigenvalues}
\label{sec:Hamil}
We now construct the system of mutually commuting Hamiltonians $H_j$ which satisfy the eigenvalue equation \eqref{eq:hamil} in the symplectic case, i.e.,
\be
\label{eq:Hchi}
H_j \chi_\lambda (\mathbf{x})=h_j(\lambda )\chi_\lambda (\mathbf{x}).
\ee
Let $D_x=x\de{x}$, and recall the definition of the usual elementary symmetric function $e_j$,
\be
\label{eq:esym}
e_j(x_1,\ldots ,x_L)=\sum_{1\leq r_1<\ldots <r_j\leq L}x_{r_1}\ldots x_{r_j}.
\ee
\begin{lm}
\label{lm:Hamil}
The Hamiltonians $H_j$, for $j=1,\ldots,L$, are given by
\[ H_jf_\lambda (\mathbf x)=(a_\delta (\mathbf x) )^{-1}e_j(D_{x_1}^2,\ldots ,D_{x_L}^2)\ a_\delta (\mathbf x)f_\lambda (\mathbf x),\]
with corresponding eigenvalues
\[ h_j(\lambda )=e_j(\mu _1^2,\ldots ,\mu _L^2),\]
recalling that $\mu=\lambda+\delta$.
\end{lm}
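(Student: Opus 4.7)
The natural strategy is to conjugate away the $a_\delta$ factor in $H_j$ using $\chi_\lambda = a_\mu/a_\delta$, which reduces the eigenvalue equation \eqref{eq:Hchi} to the single claim
\[
e_j(D_{x_1}^2,\ldots,D_{x_L}^2)\, a_\mu(\mathbf x) = e_j(\mu_1^2,\ldots,\mu_L^2)\, a_\mu(\mathbf x).
\]
So the whole task is to show that $a_\mu$ is a joint eigenfunction of the commuting family $\{e_j(D_{x_1}^2,\ldots,D_{x_L}^2)\}_{j=1}^L$ with the stated eigenvalues.

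The key observation is that $D_x(x^{\pm m}) = \pm m\, x^{\pm m}$, so each matrix entry of \eqref{eq:a} satisfies
\[
D_{x_i}^2\bigl(x_i^{\mu_j}-x_i^{-\mu_j}\bigr) = \mu_j^2\,\bigl(x_i^{\mu_j}-x_i^{-\mu_j}\bigr),
\]
i.e.\ the $(i,j)$ entry is an eigenvector of $\hat D_i := D_{x_i}^2$ with eigenvalue $\mu_j^2$. Expanding the determinant as $a_\mu = \sum_{\sigma\in\mathcal S_L}\mathrm{sgn}(\sigma)\prod_i \bigl(x_i^{\mu_{\sigma(i)}}-x_i^{-\mu_{\sigma(i)}}\bigr)$, any product $\hat D_{i_1}\cdots\hat D_{i_j}$ (with distinct indices) acts on the $\sigma$-term by the scalar $\mu_{\sigma(i_1)}^2\cdots\mu_{\sigma(i_j)}^2$.

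Summing such monomials in accordance with \eqref{eq:esym} gives $e_j(\mu_{\sigma(1)}^2,\ldots,\mu_{\sigma(L)}^2)$ on each $\sigma$-term, and because $e_j$ is a symmetric function this scalar is independent of $\sigma$ and equals $e_j(\mu_1^2,\ldots,\mu_L^2)$. Pulling the scalar out of the summation over $\sigma$ then recovers $a_\mu$, which is the required identity. Dividing by $a_\delta$ yields the eigenvalue equation for $\chi_\lambda$. Mutual commutativity of the $H_j$ is automatic: the operators $\hat D_i$ involve disjoint variables and therefore commute pairwise, polynomial symmetric functions in commuting variables commute, and conjugation by $a_\delta$ preserves commutators.

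The only real subtlety is bookkeeping: one must keep track of the fact that the row index $i$ of the determinant is the variable of differentiation for $\hat D_i$, while the eigenvalue it extracts depends on the column $\sigma(i)$. Applying a \emph{symmetric} polynomial $e_j$ is exactly what is needed to eliminate the $\sigma$-dependence and produce a scalar. This point is the conceptual heart of the argument; everything else is routine.
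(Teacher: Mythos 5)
Your proof is correct and follows essentially the same route as the paper's: reduce to the eigenvalue equation for $a_\mu$, use $D_{x_i}^2(x_i^{\mu_j}-x_i^{-\mu_j})=\mu_j^2(x_i^{\mu_j}-x_i^{-\mu_j})$ on the permutation expansion of the determinant, and invoke the symmetry of $e_j$ to pull out a $\sigma$-independent scalar. The added remark on mutual commutativity is a harmless bonus not present in the paper's proof.
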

\begin{proof}
The proof that these Hamiltonians satisfy \eqref{eq:Hchi} is equivalent to the proof of
\[ e_j(D_{x_1}^2,\ldots ,D_{x_L}^2)\ a_\mu (\mathbf{x})=e_j(\mu _1^2,\ldots ,\mu _L^2)\ a_\mu (\mathbf{x}).\]
With a slight abuse of notation, we set $v_i^j=x_i^{\mu_j}-x_i^{-\mu_j}$ and note that $D_{x_i}^2 v_i^j=\mu_j^2\ v_i^j$.
Writing $a_\mu$ as
\[ a_\mu (x_1,\ldots ,x_L)=\sum_{\sigma\in S_L}(-1)^\sigma v_1^{\sigma_1}\ldots v_L^{\sigma_L},\]
we then have
\begin{align*}
e_j(D_{x_1}^2,\ldots ,D_{x_L}^2)\ a_\mu &=\sum_{1\leq r_1<\ldots <r_j\leq L}\left (\sum_{\sigma\in S_L}D_{x_{r_1}}^2\ldots D_{x_{r_j}}^2(-1)^\sigma v_1^{\sigma_1}\ldots v_L^{\sigma_L}\right )\\
&=\sum_{\sigma\in S_L}\left(\sum_{1\leq r_1<\ldots <r_j\leq L}\mu_{\sigma_{r_1}}^2\ldots \mu_{\sigma_{r_j}}^2\right) (-1)^\sigma v_1^{\sigma_1}\ldots v_L^{\sigma_L}.
\end{align*}
Due to symmetry, the inner sum is independent of $\sigma$, and equal to $e_j(\mu _1^2,\ldots ,\mu _L^2)$. This can be taken out of the outer sum, which is equal to $a_\mu$.
\end{proof}

\section{Differential equation for $q_\lambda$}
\label{sec:diff}

In this section we describe the single variable polynomial $q_\lambda(z)$, and define the differential operator $W_i$ which satisfies \eqref{eq:Wq0}.
\begin{defn}
The polynomial $q_\lambda$ is given by
\be
\label{eq:q}
q_\lambda (z)=\frac{\chi_\lambda(z,1,\ldots,1)}{\chi_\lambda(1,\ldots,1)}.
\ee
\end{defn}
It will be useful to introduce $\phi_\mu(z_1,\ldots,z_k)$, given by
\begin{align}
\label{eq:phi}
\phi_\mu (z_1,\ldots,z_k)&=a^{(k+1)}_\mu(z_1,\ldots,z_k)/a^{(1)}_\mu,
\end{align}
and in particular
\begin{align}
\label{eq:phi1}
\phi_\mu (z)&=\sum_{j=1}^L\frac{z^{\mu_j}-z^{-\mu_j}}{\mu_j\prod_{i\neq j}(\mu _j^2-\mu _i^2)},
\end{align}
where we have used the identity \eqref{eq:am2}. Now, recalling that $\mu=\lambda+\delta$, we can write $q_\lambda (z)$ as
\be
\label{eq:qphi}
q_\lambda (z)=\phi_\mu (z)/\phi_\delta(z).
\ee
\begin{lm}
The polynomial $\phi_\mu(z)$ satisfies the differential equation
\be
\label{eq:DEphi}
\prod_{n=1}^L (D_z^2-\mu _n^2)\phi_\mu (z)=0.
\ee
\end{lm}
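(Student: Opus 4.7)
The plan is to exploit the explicit formula \eqref{eq:phi1} for $\phi_\mu(z)$ as a sum of $L$ ``elementary'' terms, each of which is an eigenfunction of $D_z^2$.

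First I would observe the basic computation $D_z(z^a) = a z^a$, so $D_z^2 z^{\pm \mu_j} = \mu_j^2 z^{\pm \mu_j}$. Consequently, each function $z^{\mu_j} - z^{-\mu_j}$ is an eigenfunction of $D_z^2$ with eigenvalue $\mu_j^2$, and therefore is annihilated by the single factor $(D_z^2 - \mu_j^2)$.

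Next, since the scalar operators $D_z^2 - \mu_n^2$ (for $n=1,\ldots,L$) all commute pairwise, I may reorder the product $\prod_{n=1}^L (D_z^2 - \mu_n^2)$ at will. Applying this product to the $j$-th summand in the expression
\[
\phi_\mu (z)=\sum_{j=1}^L\frac{z^{\mu_j}-z^{-\mu_j}}{\mu_j\prod_{i\neq j}(\mu _j^2-\mu _i^2)},
\]
I pull the factor $(D_z^2 - \mu_j^2)$ to the right, whereupon it annihilates $z^{\mu_j} - z^{-\mu_j}$. Summing over $j$ gives zero, which is exactly \eqref{eq:DEphi}.

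There is essentially no obstacle here: the differential equation is built into the spectral decomposition of $\phi_\mu$ implicit in \eqref{eq:phi1}. The only thing to remark is that the denominators $\mu_j \prod_{i\neq j}(\mu_j^2-\mu_i^2)$ are scalar constants independent of $z$ and hence commute through the differential operator, so they play no role in the annihilation.
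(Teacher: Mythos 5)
Your proof is correct and follows essentially the same route as the paper: both use the explicit expansion \eqref{eq:phi1}, the eigenfunction property $D_z^2\,(z^{\mu_j}-z^{-\mu_j})=\mu_j^2\,(z^{\mu_j}-z^{-\mu_j})$, and the commutativity of the factors $(D_z^2-\mu_n^2)$ so that the $j$th factor annihilates the $j$th summand. No gaps; your remark about the constant denominators is a harmless elaboration of what the paper leaves implicit.
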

\begin{proof}
By definition we have
\[ D_z^2 \phi_\mu (z)=\sum_{j=1}^L \mu_j^2\, \frac{(z^{\mu_j}-z^{-\mu_j})}{\mu_j\prod_{i\neq j}(\mu _j^2-\mu _i^2)},\]
so
the $n$th term of the sum in $\phi_\mu$ is reduced to $0$ by the $n$th factor in the product.
\end{proof}

\begin{lm}
The differential equation $Wq_\lambda(z)=0$ is satisfied when $W$ is given by
\be
\label{eq:W}
W=\prod_{n=1}^L (Z^2-\mu_n^2),
\ee
where
\[ Z=D_z +\frac{(Lz^2+2Lz-2z+L)}{(z^2-1)}.\]
\end{lm}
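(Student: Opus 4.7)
The plan is to recognise $Z$ as a gauge transformation of $D_z$ by the function $\phi_\delta$. Concretely, I would aim to prove the operator identity
\[
Z = \phi_\delta(z)^{-1}\, D_z\, \phi_\delta(z),
\]
since then, because conjugation is multiplicative and $\phi_\delta^{-1} D_z^2 \phi_\delta = (\phi_\delta^{-1} D_z \phi_\delta)^2 = Z^2$, we obtain
\[
W=\prod_{n=1}^L(Z^2-\mu_n^2)=\phi_\delta(z)^{-1}\prod_{n=1}^L(D_z^2-\mu_n^2)\,\phi_\delta(z).
\]
Combined with $q_\lambda(z)=\phi_\mu(z)/\phi_\delta(z)$ from \eqref{eq:qphi} and the differential equation \eqref{eq:DEphi} from the previous lemma, this instantly gives
\[
W q_\lambda(z)=\phi_\delta(z)^{-1}\prod_{n=1}^L(D_z^2-\mu_n^2)\,\phi_\mu(z)=0.
\]

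The bulk of the work is therefore to verify the displayed identity for $Z$. Expanding, for any test function $f$,
\[
\phi_\delta^{-1} D_z (\phi_\delta f)=D_z f+\frac{z\,\phi_\delta'(z)}{\phi_\delta(z)}\,f,
\]
so I need to check that $z\phi_\delta'/\phi_\delta=(Lz^2+2Lz-2z+L)/(z^2-1)$. To do this I would use \eqref{eq:phi} to write $\phi_\delta(z)=a^{(2)}_\delta(z)/a^{(1)}_\delta$, then specialise \eqref{eq:adk} at $k=2$, which yields (up to a constant independent of $z$)
\[
\phi_\delta(z)\ \propto\ z^{1-L}(z-1)^{2(L-1)}(z-z^{-1})\ =\ \text{const}\cdot z^{-L}(z-1)^{2L-1}(z+1).
\]

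Taking a logarithmic derivative then gives
\[
\frac{\phi_\delta'(z)}{\phi_\delta(z)}=-\frac{L}{z}+\frac{2L-1}{z-1}+\frac{1}{z+1},
\]
and a direct combination over the common denominator $z^2-1$ simplifies $z\phi_\delta'/\phi_\delta$ to exactly $(Lz^2+2Lz-2z+L)/(z^2-1)$, matching the definition of $Z-D_z$. This is a short but bookkeeping-heavy algebraic computation and is the main (though still routine) obstacle; the risk is simply in tracking the $L$-dependent exponents correctly when reading off $a^{(2)}_\delta$ from \eqref{eq:adk}.

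With the gauge identity established, the proof concludes with the three-line argument given at the top, using \lmref{kprop} (through the definition of $\phi_\mu$) only implicitly via the already-proven \eqref{eq:DEphi}. No further analytic input is needed.
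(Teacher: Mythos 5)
Your proposal is correct and follows essentially the same route as the paper: both identify $Z$ as the conjugation $\phi_\delta^{-1}D_z\,\phi_\delta$ using the explicit product form of $\phi_\delta(z)$ obtained from \eqref{eq:adk} and \eqref{eq:ad1}, and then transport the differential equation \eqref{eq:DEphi} for $\phi_\mu=\phi_\delta q_\lambda$ through this gauge transformation. Your logarithmic-derivative computation of $z\phi_\delta'/\phi_\delta$ checks out and reproduces the paper's direct evaluation of $D_z\phi_\delta = \phi_\delta\,(Lz^2+2Lz-2z+L)/(z^2-1)$.
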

\begin{proof}
Equation \eqref{eq:DEphi} can be written as 
\be
\label{eq:diff2}
\prod_{n=1}^L (D_z^2-\mu _n^2)\ \phi_\delta(z) q_\lambda (z)=0.
\ee
Using \eqref{eq:adk} and \eqref{eq:ad1}, we obtain
\[\phi_\delta(z) =\frac{(z+1)(z-1)^{2L-1}}{z^L(2L-1)!},\]
and applying the derivative,
\begin{align*}
D_z \phi_\delta(z) &=\frac{(z-1)^{2L-2}(Lz^2+2Lz-2z+L)}{z^L(2L-1)!}\\
&=\phi_\delta(z) \frac{(Lz^2+2Lz-2z+L)}{(z^2-1)}.
\end{align*}
Since $D_z$ obeys the same product rule as the usual derivative, we have
\begin{align*}
D_z\ \phi_\delta(z)f(z)&=\phi_\delta(z)\left (D_z+\frac{(Lz^2+2Lz-2z+L)}{(z^2-1)}\right )f(z)\\
&=\phi_\delta(z)Zf(z).
\end{align*}
Substituting this result back into \eqref{eq:diff2} we obtain
\[
0
=\prod_{n=1}^L (Z^2-\mu _n^2)\ q_\lambda (z),
\]
for $z\neq\pm 1$.
\end{proof}

\section{Inverse separating operator}
\label{sec:invS}
The inverse of the operator $\mathcal S=\mathcal A_1\ldots\mathcal A_L$ must satisfy
\[ \mathcal S^{-1}\prod _{i=1}^L q_\lambda (x_i)=\frac{\chi_\lambda (x_1,\ldots ,x_L)}{\chi_\lambda (1,\ldots ,1)},\]
for all $\lambda$.
\begin{prop}
\label{prop:SL}
$\mathcal S^{-1}$ is given by the differential operator
\[ \left(\mathcal S^{-1} f\right)(x_1,\ldots,x_L) =\\ (-1)^{L(L-1)/2} \phi_\delta (x_1,\ldots ,x_L)^{-1}\ K_L\ \prod_{i=1}^L \phi_\delta (x_i)\ f(x_1,\ldots,x_L),\]
where
\[ K_L=\dt{ D_{x_i}^{2(L-j)}}.\]
\end{prop}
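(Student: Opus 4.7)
The plan is to verify the stated formula directly by applying it to $\prod_i q_\lambda(x_i)$ and recovering $\chi_\lambda(\mathbf{x})/\chi_\lambda(1,\ldots,1)$. The first move is to use \eqref{eq:qphi} to rewrite $\prod_i \phi_\delta(x_i) q_\lambda(x_i) = \prod_i \phi_\mu(x_i)$, so that the outer factor $\phi_\delta(\mathbf{x})^{-1}$ is poised to produce $\phi_\mu(\mathbf{x})/\phi_\delta(\mathbf{x})$, which by \eqref{eq:phi} combined with \eqref{eq:am1} and \eqref{eq:ad1} equals $\chi_\lambda(\mathbf{x})/\chi_\lambda(1,\ldots,1)$. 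The whole claim therefore reduces to the single identity
\[ K_L \prod_{i=1}^L \phi_\mu(x_i) = (-1)^{L(L-1)/2}\,\phi_\mu(x_1,\ldots,x_L). \]

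To establish this, I would exploit the fact that the Euler operators $D_{x_i}$ commute for distinct $i$, so that the operator determinant $K_L=\det[D_{x_i}^{2(L-j)}]$ is unambiguous and, when applied to a separable function, acts as an ordinary determinant:
\[ K_L \prod_{i=1}^L f_i(x_i) = \det\bigl[D_{x_i}^{2(L-j)} f_i(x_i)\bigr]_{i,j}. \]
Setting $f_i=\phi_\mu$ and using the expansion \eqref{eq:phi1}, each $D_{x_i}^{2(L-j)}\phi_\mu(x_i)$ is a linear combination $\sum_k \mu_k^{2(L-j)} g_{ik}$, where $g_{ik}=(x_i^{\mu_k}-x_i^{-\mu_k})/[\mu_k \prod_{l\ne k}(\mu_k^2-\mu_l^2)]$, because the $(z^{\mu_k}-z^{-\mu_k})$ terms are eigenvectors of $D_z^2$ with eigenvalue $\mu_k^2$. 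This realises the matrix as a product $AB$ with $A_{ik}=g_{ik}$ and $B_{kj}=\mu_k^{2(L-j)}$, and its determinant splits as $\det A \cdot \det B$.

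The remaining step is bookkeeping. The factor $\det B$ is a Vandermonde in the $\mu_k^2$ and equals $\prod_{k<l}(\mu_k^2-\mu_l^2)$, while pulling the column-dependent denominators $d_k=\mu_k\prod_{l\ne k}(\mu_k^2-\mu_l^2)$ out of $A$ gives $\det A = a_\mu(\mathbf{x})/\prod_k d_k$. Splitting $\prod_{k\ne l}(\mu_k^2-\mu_l^2)=(-1)^{L(L-1)/2}\prod_{k<l}(\mu_k^2-\mu_l^2)^2$ and comparing with \eqref{eq:am1} yields $\prod_k d_k=(-1)^{L(L-1)/2}\,a^{(1)}_\mu\prod_{k<l}(\mu_k^2-\mu_l^2)$, so that the Vandermonde $\det B$ cancels and one is left with $\det(AB)=(-1)^{L(L-1)/2} a_\mu(\mathbf{x})/a^{(1)}_\mu = (-1)^{L(L-1)/2}\phi_\mu(\mathbf{x})$, completing the reduction. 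The main obstacle is precisely this sign tracking: matching $(-1)^{L(L-1)/2}$ on the two sides requires keeping careful count of the sign coming from the antisymmetrisation $\prod_{k\ne l}$ vs.\ the ordered product $\prod_{k<l}$; everything else is forced by the eigenvalue structure of $\phi_\mu$ and the multiplicativity of determinants.
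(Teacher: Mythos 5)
Your proposal is correct and follows essentially the same route as the paper's proof: reduce to the identity $K_L\prod_i\phi_\mu(x_i)=(-1)^{L(L-1)/2}\phi_\mu(\mathbf{x})$, insert one factor of $\phi_\mu$ into each row of the determinant, use the eigenvalue relation $D_z^2(z^{\mu_k}-z^{-\mu_k})=\mu_k^2(z^{\mu_k}-z^{-\mu_k})$ to split the matrix as a product, and identify the two factors with $a_\mu(\mathbf{x})$ and a Vandermonde in the $\mu_k^2$ while tracking the sign from $\prod_{k\neq l}$ versus $\prod_{k<l}$. The only (immaterial) difference is how the scalar denominators $\mu_k\prod_{l\neq k}(\mu_k^2-\mu_l^2)$ are apportioned between the two matrices in the factorisation.
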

\begin{proof}
Acting with the operator $\mathcal S^{-1}$ on the product of $q_\lambda(x_i)$ results, using \eqref{eq:qphi}, in $K_L$ acting on a product of $\phi_\mu(x_i)$. Taking one factor of this product into each row of the determinant $K_L$, we find 
\[ \dt{ D_{x_i}^{2(L-j)} \phi_\mu(x_i) } = \dt{ \sum _{m=1}^L\frac{\mu _m^{2(L-j)+1}(x_i^{\mu _m}-x_i^{-\mu _m})}{\mu _m^2\prod_{n\neq m}(\mu _m^2-\mu _n^2)}},\]
where we have used the explicit expression for $\phi_\mu(x_i)$ given in \eqref{eq:phi1}. This can be expressed as the product of two matrices,
\begin{align*}
\dt{\sum _{m=1}^L\frac{\mu _m^{2(L-j)+1}(x_i^{\mu _m}-x_i^{-\mu _m})}{\mu _m^2\prod_{n\neq m}(\mu _m^2-\mu _n^2)}}&=\dt{\mu_i^{2(L-j)+1}}\dt{\frac{x_i^{\mu_j}-x_i^{-\mu_j}}{\mu_j^2\prod_{n\neq j}(\mu _j^2-\mu _n^2)}}\\
&=\frac{\dt{\mu_i^{2(L-j)+1}}\dt{x_i^{\mu_j}-x_i^{-\mu_j}}}{\prod_{m=1}^L\left (\mu_m^2\prod_{n\neq m}(\mu _m^2-\mu _n^2)\right )}.
\end{align*}
The first determinant in the numerator is just $a^{(1)}_\mu$, see \eqref{eq:am1}, and the second is equal to $a_\mu(x_1,\ldots,x_L)$. Since the denominator is equal to $(-1)^{L(L-1)/2}(a^{(1)}_\mu)^2$, we finally find
\[ K_L\prod_{i=1}^L\phi_\delta (x_i) q_\lambda (x_i)=(-1)^{L(L-1)/2} \phi_\mu (x_1,\ldots ,x_L),\]
from which it immediately follows that $\mathcal S^{-1}$ satisfies the required property.
\end{proof}
We can also find the inverse of the operator $\mathcal S_k=\mathcal A_1\ldots\mathcal A_k$, which must satisfy
\[ \mathcal S_k^{-1}\prod_{i=1}^k q_\lambda (x_i)=\frac{\chi_\lambda (x_1,\ldots ,x_k,1,\ldots ,1)}{\chi_\lambda (1,\ldots ,1)}.\]
\begin{prop}
\label{prop:Sk}
$\mathcal S_k^{-1}$ is given by the differential operator
\begin{multline*} \left(\mathcal S_k^{-1}f\right)(x_1,\ldots,x_L)=\\(-1)^{(L-1)L/2+k(L+1)}\phi_\delta(x_1,\ldots ,x_k)^{-1} \dt[k]{D_{x_i}^{2(k-j)}}\prod_{i=1}^k \phi_\delta(x_i) f(x_1,\ldots,x_L).\end{multline*}
\end{prop}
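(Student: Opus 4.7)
The strategy is to parallel the proof of \propref{SL} closely, the essential new feature being that the determinant involved is now $k\times k$ rather than $L\times L$, so a direct factorisation into two square matrices is unavailable and we instead apply the Cauchy--Binet identity. As a first step I would unfold the target: by the corollary to \lmref{kprop}, the ratio $\chi_\lambda(x_1,\ldots,x_k,1,\ldots,1)/\chi_\lambda(1,\ldots,1)$ equals $a_\mu^{(k+1)}(x_1,\ldots,x_k)/(a_\mu^{(1)}\phi_\delta(x_1,\ldots,x_k))$; substituting $q_\lambda(x_i)=\phi_\mu(x_i)/\phi_\delta(x_i)$ and using the fact that $D_{x_i}^{2(k-j)}$ acts only on the $i$th factor of $\prod_{i=1}^k\phi_\mu(x_i)$, the assertion reduces to the identity
\[
(-1)^{(L-1)L/2+k(L+1)}\,a_\mu^{(1)}\det\nolimits_{k}\!\left[D_{x_i}^{2(k-j)}\phi_\mu(x_i)\right] = a_\mu^{(k+1)}(x_1,\ldots,x_k).
\]

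To prove this identity I would open both sides in the same basis. On the left, use \eqref{eq:phi1} together with $D_z^{2m}(z^\mu-z^{-\mu})=\mu^{2m}(z^\mu-z^{-\mu})$ to write the $k\times k$ matrix as the product of a $k\times L$ matrix with entries $(x_i^{\mu_m}-x_i^{-\mu_m})/[\mu_m\prod_{n\neq m}(\mu_m^2-\mu_n^2)]$ and an $L\times k$ matrix with entries $\mu_m^{2(k-j)}$. Cauchy--Binet expresses its determinant as a sum over $k$-subsets $\{m_1<\ldots<m_k\}\subset\{1,\ldots,L\}$; a Vandermonde evaluation yields the factor $\prod_{\ell<\ell'}(\mu_{m_\ell}^2-\mu_{m_{\ell'}}^2)$ in each summand. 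On the right, perform Laplace expansion of $a_\mu^{(k+1)}(x_1,\ldots,x_k)$ along its top $k$ rows. The expansion is indexed by the same $k$-subsets, and the complementary $(L-k)\times(L-k)$ minor from the bottom block is recognised via \eqref{eq:am1} as $a^{(1)}_{(\mu_{p_1},\ldots,\mu_{p_{L-k}})}$, where $\{p_1<\ldots<p_{L-k}\}$ is the complement of $\{m_\ell\}$; the Laplace sign is $(-1)^{k(k+1)/2+\sum_\ell m_\ell}$.

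Matching the two subset-indexed sums term by term reduces the claim to a combinatorial identity for the coefficient of each $\det[x_i^{\mu_{m_\ell}}-x_i^{-\mu_{m_\ell}}]$. The key algebraic manipulations are the factorisations $\prod_i\mu_i = \prod_\ell\mu_{m_\ell}\prod_s\mu_{p_s}$, the splitting of $\prod_\ell\prod_{n\neq m_\ell}(\mu_{m_\ell}^2-\mu_n^2)$ into within-subset pairs and mixed pairs, and the analogous splitting of $\prod_{i<j}(\mu_i^2-\mu_j^2)$ into same-subset and mixed contributions; the mixed factors $\prod_{\ell,s}(\mu_{m_\ell}^2-\mu_{p_s}^2)$ should cancel cleanly between numerator and denominator. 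The main obstacle is the sign bookkeeping: one must combine the stated prefactor $(-1)^{(L-1)L/2+k(L+1)}$, the Laplace sign, the Vandermonde reversal sign from $\det[\mu_{m_\ell}^{2(k-j)}]$, the sign $(-1)^{k(k-1)/2}$ coming from within-subset pair swaps, and the sign $(-1)^{\sum_\ell m_\ell-k(k+1)/2}$ produced when reordering mixed pairs of $\prod_{i<j}(\mu_i^2-\mu_j^2)$, and verify that everything combines to produce exactly the Laplace sign. As a consistency check, at $k=L$ the Cauchy--Binet sum collapses to a single term and the argument reduces to the proof of \propref{SL}.
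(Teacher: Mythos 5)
Your route is genuinely different from the paper's. The paper never forms a $k\times k$ Cauchy--Binet sum: it writes $\phi_\mu(x_1,\ldots,x_k)=a^{(1)}_\mu a^{(k+1)}_\mu/(a^{(1)}_\mu)^2$, absorbs the denominator into the rows of the second $L\times L$ determinant, multiplies the two $L\times L$ matrices, and observes that the entries in rows $i>k$ of the product are complete homogeneous symmetric functions $h_{L+1-i-j}(\mu_1^2,\ldots,\mu_L^2)$, which vanish for $i+j>L+1$ and equal $1$ on the anti-diagonal; the $L\times L$ determinant then collapses onto the $k\times k$ block of interest. Your Cauchy--Binet expansion matched against the Laplace expansion of $a^{(k+1)}_\mu$ along its top $k$ rows is the transparent dual of that collapse, and its structural steps are all correct: the factorisation of the $k\times k$ matrix into a $k\times L$ and an $L\times k$ factor, the Vandermonde evaluation of $\det[\mu_{m_\ell}^{2(k-j)}]$, the identification of the complementary minor with $a^{(1)}$ of the complementary $\mu$'s, and the clean cancellation of the mixed factors $\prod_{\ell,s}(\mu_{m_\ell}^2-\mu_{p_s}^2)$.

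The gap is that the one step you defer --- ``verify that everything combines to produce exactly the Laplace sign'' --- is not a formality, and it does not close for the prefactor as stated. Carrying out your own bookkeeping: the subset-$S$ term on the left carries the sign $(-1)^{\sum_\ell m_\ell+k}$ (namely $(-1)^{k(k-1)/2}$ from symmetrising the within-subset pairs in the denominator, times $(-1)^{\sum_\ell m_\ell-k(k+1)/2}$ from reordering the mixed pairs), while the Laplace term carries $(-1)^{k(k+1)/2+\sum_\ell m_\ell}$. Equality of the two sums therefore forces the prefactor to be $(-1)^{k(k-1)/2}$, whereas the proposition asserts $(-1)^{(L-1)L/2+k(L+1)}$; these exponents differ by $(L-k)(L-k-1)/2$, so they agree only when $L-k\equiv 0,1\pmod 4$. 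The case $k=1$ makes the discrepancy concrete: $\mathcal S_1^{-1}$ must act as the identity on $q_\lambda(x_1)$, yet the stated prefactor is $(-1)^{(L-1)L/2+L+1}=-1$ for $L=3$. The paper's own proof contains the matching slip in its final step, where the block-anti-triangular collapse actually produces $(-1)^{k(L-k)+(L-k)(L-k-1)/2}$ but is silently identified with $(-1)^{k(L+1)}$. So your method, once the sign count is actually performed, proves the proposition with corrected prefactor $(-1)^{k(k-1)/2}$ (consistent with \propref{SL} at $k=L$); as written, the deferred verification would fail against the statement as given.
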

$\mathcal S_k^{-1}$ is not necessary for the purposes of this paper, but will be useful for the asymptotics. We have included a proof of \propref{Sk} in \apref{Sk}.

\section{The integral operator $Q_z$}
\label{sec:Q}
In this section we will construct the operator $Q_z$ satisfying
\be
\label{eq:Qchi}\left[Q_z\chi_\lambda\right ](x_1,\ldots ,x_L)=q_\lambda (z)\,\chi_\lambda (x_1,\ldots ,x_L).
\ee
%
In order to construct $Q_z$ as an integral operator, we will need to define an appropriate domain of integration, which we will do first. The integration variables are $t_1,\ldots,t_{L-1}$, $y_1,\ldots,y_L$, and $w$. The integration variables interlace the $x$s as $x_i\leq y_i/t_i\leq x_{i+1}$, for $i=1,\ldots,L-1$.

\begin{defn}
\label{def:P}
For any Laurent series $f(t)=\sum_{m\in\mathbb{Z}} c_m t^m$ with no constant term, i.e. $c_0=0$, the domain $P$ of the integral over $t$ is defined as
\begin{align*}\int _P \frac{\dd t}{t} f(t)&:=\int _0^1\frac{\dd t}{t} \sum _{m>0} c_m t^m -\int _1^\infty\frac{\dd t}{t} \sum _{m<0} c_m t^m 
= \sum_{m\in\mathbb{Z}} \frac{c_m}{m}.
\end{align*}
For $1\leq i\leq L-1$, We also define
\[\iint_{\mathcal{D}_i}\dd Y_i=\int _P\frac{\dd t_i}{t_i}\int _{t_ix_i}^{t_ix_{i+1}}\frac{\dd y_i}{y_i},\]
and we will need two domains for the integrals over $y_L$, 
\[
x_L<\frac{y_L}{ t_1\ldots t_{L-1}}<\infty \qquad \text{and}\qquad
0<\frac{y_L}{t_1\ldots t_{L-1}}<x_1.
\]
\end{defn}

With these definitions we can now write down explicitly the operator $Q_z$ which satisfies \eqref{eq:Qchi}.
\begin{prop}
\label{prop:Q}
$Q_z$ is given by
\begin{align}
&\left[Q_z f\right ](x_1,\ldots ,x_L)=\nn\\
&\quad \frac{1}{\phi_\delta (z)a_\delta (x_1,\ldots ,x_L)}\int _1^z\frac{\dd w}{w}\int_{\mathbf D}\dd \mathbf Y\ a_\delta (y_1,\ldots ,y_L)f(y_1,\ldots ,y_L),\nn
\end{align}
where
\begin{align*}
\int_{\mathbf D} \dd\mathbf Y&=\iint _{\mathcal{D}_1}\dd Y_1\cdots\iint _{\mathcal{D}_{L-1}}\dd Y_{L-1}\left[\int_{t_1\ldots t_{L-1}x_{L}}^\infty \dd y_L\ \delta\left (y_L-wx_L\prod_{l=1}^L\frac{x_l t_l^2}{y_l}\right )\right .\\
&\quad\left .+\int _0^{t_1\ldots t_{L-1}x_1} \dd y_L\ \delta\left (y_L-\frac{x_L}{w}\prod_{l=1}^L\frac{x_l t_l^2}{y_l}\right )\right ].
\end{align*}
\end{prop}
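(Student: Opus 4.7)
The plan is a direct computation of the integral. First, substituting $\chi_\lambda = a_\mu/a_\delta$ cancels the $a_\delta(\mathbf y)$ factor in the integrand, and since $q_\lambda = \phi_\mu/\phi_\delta$, the identity \eqref{eq:Qchi} reduces to
\[
\int_1^z\frac{\dd w}{w}\int_{\mathbf D}\dd\mathbf Y\, a_\mu(y_1,\ldots,y_L) = \phi_\mu(z)\,a_\mu(x_1,\ldots,x_L).
\]
I would then expand
\[
a_\mu(\mathbf y) = \sum_{\sigma\in S_L}(-1)^\sigma\prod_i\bigl(y_i^{\mu_{\sigma(i)}} - y_i^{-\mu_{\sigma(i)}}\bigr)
\]
and evaluate the iterated integral on each signed Weyl monomial separately.

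For each term the integrations would be performed in the order $y_L, y_1,\ldots,y_{L-1}, t_1,\ldots,t_{L-1}, w$. The $y_L$ integration is collapsed by the delta function, with the two pieces of $\mathbf D$ substituting $y_L$ by expressions proportional to $w^{\pm 1}$ times products of $x_l, t_l$ and $y_l^{-1}$; this redistributes the row-$L$ factor $y_L^{\pm\mu_{\sigma(L)}}$ as $w^{\pm\mu_{\sigma(L)}}$ together with $t_l^{\pm 2\mu_{\sigma(L)}}$, $y_l^{\mp\mu_{\sigma(L)}}$ and $x$-factors across the other rows. The $y_i$ integrations for $i<L$ are then elementary logarithmic integrals giving differences $(t_i x_{i+1})^m - (t_i x_i)^m$ at shifted exponent $m = \pm\mu_{\sigma(i)}\mp\mu_{\sigma(L)}$, and the $t_i$ integrations over $P$ contribute $1/m$ via the defining formula $\int_P \dd t/t\,t^m = 1/m$. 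The two sign choices in row $i$ pair up so that the product of denominators telescopes to $\mu_{\sigma(i)}^2 - \mu_{\sigma(L)}^2$, which is exactly the combination appearing in \eqref{eq:phi1}. Finally, the $w$ integration supplies $(z^{\mu_{\sigma(L)}} - z^{-\mu_{\sigma(L)}})/\mu_{\sigma(L)}$, provided the two $y_L$ pieces combine with a relative sign so that the endpoint contributions at $w=1$ cancel.

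Identifying the resulting sum with $\phi_\mu(z)\,a_\mu(\mathbf x)$ then comes down to grouping terms by the row-$L$ eigenvalue $\mu_{\sigma(L)}$: the $z$-factor and the $\mu$-denominators in each group match the summand of \eqref{eq:phi1}, while the remaining sum of signed permutations of the unpicked $\mu$'s over rows $1,\ldots,L-1$ telescopes --- via the boundary differences $x_{i+1}^m - x_i^m$ and the $x_l^{\mu_{\sigma(L)}}$-factors coming from the delta substitution --- into the Weyl-antisymmetric determinant $a_\mu(\mathbf x)$.

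The main obstacle will be bookkeeping. The delta-function substitution shifts every exponent in rows $i<L$ by $\pm\mu_{\sigma(L)}$, so the $y_i$- and $t_i$-integrations all involve exponent combinations $\pm\mu_{\sigma(i)}\pm\mu_{\sigma(L)}$ rather than the $\mu_j$ themselves. One must verify that these shifted exponents never vanish (so that the $P$-integrands carry no constant term, ensuring $\int_P \dd t/t\,t^m = 1/m$ applies), and track the relative sign between the two $y_L$ pieces so that the $w$-integrals add to a $\sinh$-like combination rather than cancelling. Once these signs and endpoint contributions are controlled, the final reassembly into $a_\mu(\mathbf x)$ is a combinatorial identity that exploits the nested, Gelfand--Tsetlin-like interlacing structure of the integration domain $\mathbf D$.
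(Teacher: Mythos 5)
Your plan follows the paper's proof essentially step for step: reduce \eqref{eq:Qchi} to $\int_1^z\frac{\dd w}{w}\int_{\mathbf D}\dd\mathbf Y\ a_\mu(\mathbf y)=\phi_\mu(z)\,a_\mu(\mathbf x)$, expand the determinant, collapse the $y_L$ integral with the delta functions, evaluate the $w$, $y_i$ and $t_i$ integrals to produce the factor $(z^{\mu_r}-z^{-\mu_r})/\mu_r$ and the denominators $\mu_j^2-\mu_r^2$ (which in fact arise as the product of the $y_i$- and $t_i$-denominators within each sign choice, not from pairing the two sign choices), and finally reassemble. The one step you leave as an asserted ``combinatorial identity'' --- that each $r$-labelled group reproduces the same $a_\mu(\mathbf x)$ --- is exactly the telescoping you describe, and the paper executes it by bordering the $(L-1)\times(L-1)$ cofactor determinants with an extra row and column, row-reducing, and merging the two bordered determinants; this is the part you would still need to carry out explicitly.
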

\begin{proof}
The LHS of \eqref{eq:Qchi} becomes
\[\frac{1}{\phi_\delta (z)a_\delta (x_1,\ldots ,x_L)}\int _1^z\frac{\dd w}{w}\int_{\mathbf D} \dd \mathbf Y\ a_\mu (y_1,\ldots ,y_L),\]
while the RHS can be written as
\[\frac{ \phi_\mu (z)}{\phi_\delta(z)} \frac{a_\mu (x_1,\ldots ,x_L)}{a_\delta (x_1,\ldots ,x_L)}.\]
It is therefore enough to prove that
\begin{align}
\label{eq:Qred}
&\int _1^z\frac{\dd w}{w}\int_{\mathbf D}\dd\mathbf Y\ a_\mu (y_1,\ldots ,y_L)\stackrel{?}{=}\phi_\mu (z)\ a_\mu (x_1,\ldots ,x_L).
\end{align}
To do this, we first sketch three important steps:
\begin{enumerate}
\item[i.] We expand $a_\mu(\mathbf y)$ along the bottom row, producing
\be \sum_{r=1}^L (-1)^{L+r}(y_L^{\mu_r}-y_L^{-\mu_r})\dt[L-1]{y_i^{\mu_j}-y_i^{-\mu_j}}_{\genfrac{}{}{0pt}{}{i\neq L}{j\neq r}}.
\label{eq:aexp}
\ee
For each term in this sum, the integrals over $y_L$ can be performed easily, resulting in
\[
\label{eq:wprod}
%
y_L^{\mu_r}-y_L^{-\mu_r} \rightarrow
(w^{\mu_r}+w^{-\mu_r})\left [\left (x_L\prod _{l=1}^{L-1}\frac{x_l t_l^2}{y_l}\right )^{\mu_r}-\left (\frac1{x_L}\prod _{l=1}^{L-1}\frac{y_l}{x_l t_l^2}\right )^{\mu_r}\right ].
\]
\item[ii.] The integral over $w$ becomes elementary:
\[\int _1^z\frac{\dd w}{w}(w^{\mu_r}+w^{-\mu_r})=\frac{1}{\mu_r} \left(z^{\mu_r}-z^{-\mu_r}\right).\]
\item[iii.]
Combining \eqref{eq:aexp} and \eqref{eq:wprod} we find the further simplification
\begin{align*}
&\left [\left (x_L\prod _{l=1}^{L-1}\frac{x_l t_l^2}{y_l}\right )^{\mu_r}-\left (\frac1{x_L}\prod _{l=1}^{L-1}\frac{y_l}{x_l t_l^2}\right )^{\mu_r}\right ]\ \dt[L-1]{y_i^{\mu_j}-y_i^{-\mu_j}}_{\genfrac{}{}{0pt}{}{i\neq L}{j\neq r}}\\
&=\prod _{l=1}^{L}x_l^{\mu_r}\dt[L-1]{t_i^{2\mu_r}(y_i^{\mu_j -\mu_r}-y_i^{-\mu_j -\mu_r})}_{\genfrac{}{}{0pt}{}{i\neq L}{j\neq r}}\\
&\qquad\qquad -\prod _{l=1}^{L}x_l^{-\mu_r}\dt[L-1]{t_i^{-2\mu_r}(y_i^{\mu_j +\mu_r}-y_i^{\mu_r -\mu_j})}_{\genfrac{}{}{0pt}{}{i\neq L}{j\neq r}},
\end{align*}
after which the remaining integrals over $Y_i$ can be inserted into each row of the determinants, e.g.,
\begin{align*}
&\iint _{\mathcal{D}_1}\dd Y_1\cdots\iint _{\mathcal{D}_{L-1}}\dd Y_{L-1} \dt[L-1]{t_i^{2\mu_r}(y_i^{\mu_j -\mu_r}-y_i^{-\mu_j -\mu_r})}_{\genfrac{}{}{0pt}{}{i\neq L}{j\neq r}}\\
=&\dt[L-1]{\iint _{\mathcal{D}_{i}}\dd Y_{i}\ t_i^{2\mu_r}(y_i^{\mu_j -\mu_r}-y_i^{-\mu_j -\mu_r})}_{\genfrac{}{}{0pt}{}{i\neq L}{j\neq r}} \\
=&\dt[L-1]{\frac1{\mu_j^2 -\mu_r^2}(x_{i+1}^{\mu_j -\mu_r}-x_i^{\mu_j -\mu_r}-x_{i+1}^{-\mu_j -\mu_r}+x_i^{-\mu_j -\mu_r})}_{\genfrac{}{}{0pt}{}{i\neq L}{j\neq r}}\\
=&\frac{(-1)^{L-1}}{\prod _{j\neq r}(\mu_r^2 -\mu_j^2 )} \dt[L-1]{x_{i+1}^{-\mu_r}(x_{i+1}^{\mu_j}-x_{i+1}^{-\mu_j})-x_i^{-\mu_r}(x_i^{\mu_j}-x_i^{-\mu_j})}_{\genfrac{}{}{0pt}{}{i\neq L}{j\neq r}}.
\end{align*}
\end{enumerate}

Using these details, we finally find the following expression for the LHS of \eqref{eq:Qred}:
\begin{multline}
\label{eq:Qred2}
\sum_{r=1}^L \frac{z^{\mu_r}-z^{-\mu_r}}{\mu_r \prod _{j\neq r}(\mu_r^2 -\mu_j^2 )} \times\\
(-1)^{r-1} \left (\prod _{l=1}^{L}x_l^{\mu_r}\dt[L-1]{x_{i+1}^{-\mu_r}(x_{i+1}^{\mu_j}-x_{i+1}^{-\mu_j})-x_i^{-\mu_r}(x_i^{\mu_j}-x_i^{-\mu_j})}_{\genfrac{}{}{0pt}{}{i\neq L}{j\neq r}}\right.\\
\left. -\prod _{l=1}^{L}x_l^{-\mu_r}\dt[L-1]{x_{i+1}^{\mu_r}(x_{i+1}^{\mu_j}-x_{i+1}^{-\mu_j})-x_i^{\mu_r}(x_i^{\mu_j}-x_i^{-\mu_j})}_{\genfrac{}{}{0pt}{}{i\neq L}{j\neq r}}\right).
\end{multline}
At this point it has become clear that if $(-1)^{r-1}$ times the difference of products is independent of $r$ (so that we can factor it out of the sum), the remaining sum will be precisely equal to $\phi_\mu(z)$. Hence, it remains to show that for each $r$,
\begin{align*}
a_\mu (\mathbf x)& \stackrel{?}{=}(-1)^{r-1}\left (\prod _{l=1}^{L}x_l^{\mu_r}\dt[L-1]{x_{i+1}^{-\mu_r}(x_{i+1}^{\mu_j}-x_{i+1}^{-\mu_j})-x_i^{-\mu_r}(x_i^{\mu_j}-x_i^{-\mu_j})}_{\genfrac{}{}{0pt}{}{i\neq L}{j\neq r}}\right .\\
&\quad\left .-\prod _{l=1}^{L}x_l^{-\mu_r}\dt[L-1]{x_{i+1}^{\mu_r}(x_{i+1}^{\mu_j}-x_{i+1}^{-\mu_j})-x_i^{\mu_r}(x_i^{\mu_j}-x_i^{-\mu_j})}_{\genfrac{}{}{0pt}{}{i\neq L}{j\neq r}}\right ).
\end{align*}

This can be achieved by increasing the size of the matrices in the determinants by one row and one column, at $i=1$ and $j=r$, cancelling the factor of $(-1)^{r-1}$. The new column has entries of $0$ except for at $i=1$, which is 1, and the new row has entries $x_1^{-\mu_r}(x_1^{\mu_j}-x_1^{-\mu_j})$ for the first determinant and $x_1^{\mu_r}(x_1^{\mu_j}-x_1^{-\mu_j})$ for the second. We then use row reduction, adding the $1$st row to the $2$nd row, then the $2$nd to the $3$rd, and so on. For ease of display, we place the additional column on the end, producing an extra factor of $(-1)^{L-r}$ which will be cancelled at the very end when we permute the column back to $j=r$. Thus we obtain
\begin{multline*}
(-1)^{L-r} \left( \prod_{l=1}^L x_l^{\mu_r}\det\nolimits_L\Bigg[{\begin{array}{ll}
\Big[x_i^{-\mu_r}(x_i^{\mu_j}-x_i^{-\mu_j})\Big ]_{j\neq r} & \Big[1\Big ] 
\end{array}}\Bigg ] \right.\\
\left.-
\prod_{l=1}^L x_l^{-\mu_r}\det\nolimits_L\Bigg[{\begin{array}{ll}
\Big[x_i^{\mu_r}(x_i^{\mu_j}-x_i^{-\mu_j})\Big ]_{j\neq r} & \Big[1\Big ]
\end{array}}\Bigg ]\right),
\end{multline*}
where $[1]$ denotes a column full of 1's. Each factor of the products outside the determinants is then inserted into its respective row, resulting in
\[
=(-1)^{L-r} \left(\det\nolimits_L\Bigg[{\begin{array}{ll}
\Big[x_i^{\mu_j}-x_i^{-\mu_j}\Big ]_{j\neq r} & \Big[x_i^{\mu_r}\Big ]
\end{array}}\Bigg ]-\det\nolimits_L\Bigg[{\begin{array}{lll}
\Big[x_i^{\mu_j}-x_i^{-\mu_j}\Big ]_{j\neq r} & \Big[x_i^{-\mu_r}\Big ]
\end{array}}\Bigg ]\right).
\]
These two determinants can be combined by simply performing the subtraction in the last column. Permuting this column back to $j=r$, this finally results in
\[\dt{x_i^{\mu_j}-x_i^{-\mu_j}}=a_\mu (x_1,\ldots ,x_L).\qedhere\]
\end{proof}

\section{The integral operator $A_k$}
\label{sec:A}
The next and final step is to obtain the operator $\mathcal{A}_k$, which satisfies
\begin{align}
\label{eq:QA}\big(\rho _{k-1}Q_{z}f(\mathbf{x})\big)\big|_{z=x_k} =\mathcal A_k\rho _k f(\mathbf{x}),
\end{align}
where
\[\rho _kf(x_1,\ldots ,x_L)=f(x_1,\ldots ,x_k,1,\ldots ,1).\]
\begin{prop}
\label{prop:A}
Relation \eqref{eq:QA} is satisfied by
\begin{align*}
\mathcal{A}_k&=\lim_{\varepsilon\rightarrow 0}\frac{1}{2\varepsilon} \frac1{\phi_\delta (x_k) a^{(k)}_\delta (x_1,\ldots ,x_{k-1})}\int _1^{x_k}\frac{\dd w}{w}\int_{\mathbf D^{(\varepsilon)}} \dd \mathbf Y^{(\varepsilon)}\  a^{(k+1)}_\delta(y_1,\ldots ,y_k),
\end{align*}
where
\begin{align*} \int_{\mathbf D^{(\varepsilon)}} \dd\mathbf Y^{(\varepsilon)}&=\iint _{\mathcal D_1}\dd Y_1\cdots\iint _{\mathcal D_{k-1}}\dd Y_{k-1}\left[\int _0^{t_1\ldots t_{k-1}x_1} \dd y_k\ \delta\left (y_k-\frac{\e^\varepsilon}{w}\prod_{l=1}^{k-1}\frac{t_l^2 x_l}{y_l}\right )\right .\\
&\left .+\int_{t_1\ldots t_{k-1}\e^\varepsilon}^\infty \dd y_k\ \delta\left (y_k-w \e^\varepsilon\prod_{l=1}^{k-1}\frac{t _l^2 x_l}{y_l}\right )\right ],
\end{align*}
and
\[\iint_{\mathcal D_{k-1}}\dd Y_{k-1}=\int _P\frac{\dd t_{k-1}}{ t_{k-1}}\int_{ t_{k-1}x_{k-1}}^{ t_{k-1}\e^\varepsilon}\frac{\dd y_{k-1}}{y_{k-1}}.\]
We remind the reader that the domain of integration $P$ is defined in \defref{P}.
\end{prop}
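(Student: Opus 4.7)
The plan is to derive $\mathcal{A}_k$ from the explicit form of $Q_z$ in \propref{Q} via a careful regularised limit in which $x_k,\ldots,x_L\to 1$ and then $z=x_k$ is substituted. The main obstruction is that $a_\delta(\mathbf x)^{-1}$ has a pole while the domains $\mathcal{D}_k,\ldots,\mathcal{D}_{L-1}$ simultaneously pinch to zero-measure intervals, so these two degeneracies must be regularised and shown to cancel, leaving the finite prefactor $\lim_{\varepsilon\to 0}1/(2\varepsilon)$ appearing in $\mathcal{A}_k$.

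I would regularise along the same trajectory used in \lmref{kprop}: set $x_j=\e^{\varepsilon u_j}$ for $j=k,\ldots,L$ (the final formula of $\mathcal{A}_k$ corresponds to the choice $u_k=u_L=1$, and the result must be independent of the auxiliary parameters $u_{k+1},\ldots,u_{L-1}$) and send $\varepsilon\to 0$ at the end. \lmref{kprop} immediately handles the prefactor: $a_\delta(\mathbf x)\sim c_k(\varepsilon)^{-1}\,a^{(k)}_\delta(x_1,\ldots,x_{k-1})$, which produces the factor $1/a^{(k)}_\delta(x_1,\ldots,x_{k-1})$ of $\mathcal{A}_k$ up to the explicit regulator $c_k(\varepsilon)$.

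Inside the integral I would change variables $y_j=t_j\e^{\varepsilon v_j}$ for $j=k,\ldots,L-1$, so that each $\iint_{\mathcal{D}_j}\dd Y_j$ becomes $\varepsilon\int_P\tfrac{\dd t_j}{t_j}\int_{u_j}^{u_{j+1}}\dd v_j$, and simultaneously Taylor-expand rows $k,\ldots,L-1$ of the determinant $a_\delta(\mathbf y)$ to leading order in $\varepsilon$, exactly as in the proof of \lmref{kprop}. This factorises $a_\delta(\mathbf y)$ as an explicit power of $\varepsilon$ times an $a^{(1)}$-type Vandermonde in the $v_j$ variables times $a^{(k+1)}_\delta(y_1,\ldots,y_{k-1},y_L)$. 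The $y_L$ integration is not pinched and survives the limit; after relabelling $y_L\to y_k$ it becomes the $y_k$ integration of $\mathcal{A}_k$, and the delta functions with their supports match the statement of \propref{A} once $u_L=1$ is fixed and the residual $\prod_{l=k}^{L-1}t_l$ factors are absorbed into the formal $\int_P$ measures.

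All surviving $v_j$-integrals are elementary, and by the same determinantal identity that drives the proof of \lmref{kprop} they cancel precisely the $a^{(1)}_{(u_k,\ldots,u_L)}$ factor hidden in $c_k(\varepsilon)$, eliminating the dependence on the auxiliary parameters $u_j$ as required. The residual $\varepsilon$-powers combine into a single $1/(2\varepsilon)$ (the factor of $2$ coming from the $2\sinh$ expansion of \lmref{kprop}), reproducing the prefactor of $\mathcal{A}_k$, and substituting $z=x_k$ turns $\int_1^z\dd w/w$ into $\int_1^{x_k}\dd w/w$. The hard part is the bookkeeping: tracking every $\varepsilon$-power and Vandermonde factor arising from $a_\delta(\mathbf x)^{-1}$, from the Jacobians of the $v_j$-substitutions, and from the Taylor expansion of $a_\delta(\mathbf y)$, and verifying the complete cancellation of the auxiliary $u_{k+1},\ldots,u_{L-1}$ dependence. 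This cancellation is essentially \lmref{kprop} applied in reverse to the $v$-block of the determinant, so it should go through, but it requires careful organisation of the indices.
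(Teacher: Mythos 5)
Your strategy --- degenerating the explicit integral representation of $Q_z$ under $x_k,\ldots,x_L\to1$ --- is not the route the paper takes, and as sketched it has gaps that go beyond bookkeeping. The paper never confronts the singular limit of the $Q_z$ integral at all: since \propref{Q} already gives $Q_z\chi_\lambda=q_\lambda(z)\chi_\lambda$, the left side of \eqref{eq:QA} evaluated on a basis element $f=\chi_\lambda$ is simply $q_\lambda(x_k)\,\rho_{k-1}\chi_\lambda$, so the proposition reduces to verifying $\mathcal A_k\rho_k\chi_\lambda=q_\lambda(x_k)\rho_{k-1}\chi_\lambda$, i.e.\ identity \eqref{eq:Aam}, by a direct computation of the $\mathcal A_k$ integral (row expansion of $a^{(k+1)}_\mu$ along the $k$th row, evaluation of the $y_k$, $w$ and $Y_i$ integrals, and a chain of determinant manipulations), followed by linearity. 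No regularised limit of the $Q_z$ kernel is needed.

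The concrete problems with your degeneration are these. First, after the substitution $y_j=t_j\e^{\varepsilon v_j}$ the pinched variables tend to $t_j$, not to $1$, so rows $k,\ldots,L-1$ of $a_\delta(\mathbf y)$ do not become small and \lmref{kprop} does not apply to them; the claimed factorisation into a power of $\varepsilon$ times a Vandermonde in the $v_j$ times $a^{(k+1)}_\delta(y_1,\ldots,y_{k-1},y_L)$ is unsupported. Second, the power counting fails at the order you work to: $a_\delta(\mathbf x)^{-1}\sim c_k(\varepsilon)/a^{(k)}_\delta\sim\varepsilon^{-(L-k+1)^2}$, while the pinched domains supply only $\varepsilon^{L-k}$ at leading order; the missing $(L-k+1)(L-k)$ orders of vanishing must come from cancellations inside the formal $\int_P$ integrals, which a leading-order expansion cannot produce. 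Third, the limiting integrand still depends on $t_k,\ldots,t_{L-1}$, both through $f(y_1,\ldots,y_{k-1},t_k,\ldots,t_{L-1},y_L)$ and through the factor $\prod_{l=k}^{L-1}t_l$ surviving in the delta-function argument; since $\int_P\frac{\dd t}{t}\,t^m=1/m$ is not evaluation at $t=1$, ``absorbing these into the $\int_P$ measures'' is not an argument, and nothing in your sketch shows how this collapses to $\rho_kf$ and to the delta functions of \propref{A}. Finally, you conflate two distinct regulators: the $\varepsilon$ of $x_j=\e^{\varepsilon u_j}$ and the $\varepsilon$ appearing in the definition of $\mathcal A_k$ itself. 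You have correctly located where the $\e^{\varepsilon}$'s in $\mathcal A_k$ come from (they are the regularised $x_{k}$ and $x_L$ with $u_k=u_L=1$), but that explains the shape of the formula rather than proving \eqref{eq:QA}; to complete the argument you should instead verify the action of the stated $\mathcal A_k$ on the spanning set $\{\rho_k\chi_\lambda\}$ directly, as the paper does.
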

\begin{proof}
For simplicity, assume that $f=\chi_\lambda$ (we will generalise this later), so $a^{(k+1)}_\delta\rho_kf=a^{(k+1)}_\mu$. We want to prove that
$\mathcal{A}_k \rho_k\chi_\lambda(\mathbf{x})=q_\lambda(x_k)\rho_{k-1}\chi_\lambda(\mathbf{x})$, i.e.
\begin{align}
\label{eq:Aam}
\lim_{\varepsilon\rightarrow 0}\frac1{2\varepsilon}\int _1^{x_k}\frac{\dd w}{w}\int_{\mathbf D^{(\varepsilon)}} \dd \mathbf Y^{(\varepsilon)}\  a^{(k+1)}_\mu(y_1,\ldots ,y_k)=\phi_\mu(x_k)a^{(k)}_\mu(x_1,\ldots,x_{k-1}).
\end{align}

This proof is very similar to that of \propref{Q}, but with a few added subtleties. The differences in the proof are outlined here, and we give the details in \apref{Ak}.

As in step i. of the proof of \propref{Q}, the determinant $a^{(k+1)}_\mu(x_1,\ldots,x_k)$ is expanded along a row, but this time along the $k$th row instead of the last. Also, the integrals over $y_k$ take the place of that over $y_L$ in this step. The simplification in step iii. is also very similar. The products outside the determinant now run from $1$ to $k-1$, and contain an extra factor of $\e^{\pm\varepsilon\mu_r}$. The $k-1$ integrals over $Y_1,\ldots Y_{k-1}$ get inserted into the first $k-1$ rows of each determinant.

After these integrals are performed, the sizes of the matrices are increased as before. This time we choose the entries of the new row proportional to $\sinh \varepsilon\mu_j$ so that the matrices simplify after row reduction. After this, the prefactors in front of the two determinants can be cancelled, and the determinants can be combined.

The final step is to approximate to first order in $\varepsilon$, and then perform row reduction on the rows from $k$ to $L$, in order to remove the dependence on $\mu_r$ from each row. The result is a factor $a^{(k)}_\mu(x_1,\ldots,x_{k-1})$, which can be factored out of the remaining sum over $r$. This last sum is simply equal to $\phi_\mu(x_k)$, and the proof is complete for $f=\chi_\lambda$. Since the $\chi_\lambda$ form a linear basis, it follows that \eqref{eq:QA} holds for any polynomial in $\mathbb C[y_1^\pm,\ldots ,y_k^\pm ]^{\mathcal W_k}$.

\end{proof}

\section{Conclusion}
\label{sec:conc}
We have set out the $Q$-operator method of separation of variables for the symplectic character according to the steps laid out in \secref{method}. The separating operator is constructed from a chain of operators $\mathcal A_k$, each of which splits off the dependence of a single variable from $\chi_\lambda$. The factorised Hamiltonian is constructed as described in \secref{fachamil} from the differential equation satisfied by $q_\lambda$.

The separation of variables for the symplectic character follows along similar lines as that of the type $A$ Schur polynomial. However, some of the intermediate steps are technically much more involved. In particular, the $Q$ and $\mathcal A$-operators contain a double integration in each variable, whereas the corresponding operators for the Schur function contain only one. This is related to the fact that the Hamiltonian for the symplectic character contains double derivatives.

An obvious extension of this work is to generalise the method to Jack polynomials of type $BC$, or even Koornwinder polynomials. It is not expected that the inverse separating operator will be easy to construct in the more general case, as it is still an unsolved problem in the case of type $A$ Jack polynomials. However, another method for reversing the SoV process was used in \cite{KuzMS03}, called the lifting operator, and it is hoped that an analog of this might prove useful for the more general $BC$ polynomials.

The problem of asymptotics for this polynomial can now be regarded as a problem of asymptotics for the separated polynomial, $q_\lambda$. As mentioned earlier, we are particularly interested in the asymptotic limit of $\chi_\lambda$ when all but $k$ variables are set to 1, and the operator $\mathcal S_k$ of \propref{Sk} and its inverse are useful here. The differential equation satisfied by $q_\lambda$, given in \secref{diff}, will be a useful tool in determining the asymptotics of $q_\lambda$.


\section*{Acknowledgment}
\label{sec:ack}
We thank Vladimir Mangazeev for suggesting the SoV approach to us, as well as for interesting discussions. We are indebted to the Australian Research Council (ARC) for financial support. AP is grateful for the hospitality of the Rudolph Peierls Centre for Theoretical Physics at the University of Oxford where part of this work was undertaken, and would also like to thank Robert Weston for useful discussions.


\appendix
\section{Inductive proof of \lmref{kprop}}
\label{ap:altproof}
\begin{proof}
We take $\lim_{\varepsilon\rightarrow 0}a^{(k+1)}_\mu(x_1,\ldots,x_{k-1},\e^\varepsilon)$. In the $k$th row of the determinant, the $j$th element is
\[
\e^{\varepsilon\mu _j}-\e^{-\varepsilon\mu _j}=2\sinh(\varepsilon\mu _j)=2\sum_{n=0}^\infty \frac{(\varepsilon\mu _j)^{2n+1}}{(2n+1)!}.
\]
In the determinant \eqref{eq:atrunc}, we use row reduction with the rows below to remove the terms up to $n=L-(k+1)$. The remainder of the series is
\[ 2\frac{(\varepsilon\mu _j)^{2(L-k)+1}}{(2(L-k)+1)!}+ {\rm h.o.t.},\]
so $ a^{(k)}_\mu(x_1,\ldots,x_{k-1})=\lim_{\varepsilon\rightarrow 0}\frac{(2(L-k)+1)!}{2\varepsilon^{2(L-k)+1}} a^{(k+1)}_\mu(x_1,\ldots,x_{k-1},\e^\varepsilon)$.
\end{proof}

\section{Proof of \propref{Sk}}
\label{ap:Sk}
\begin{proof}
The proof of the required property of $\mathcal S_k^{-1}$ is equivalent to the proof of
\be
\label{eq:redSk}
\dt[k]{D_{x_i}^{2(k-j)}}\prod_{i=1}^k \phi_\mu (x_i)=(-1)^{\frac{(L-1)L}{2}+k(L+1)}\phi_\mu(x_1,\ldots ,x_k).
\ee
On the LHS, we insert one factor $\phi_m(x_i)$ into row $i$ of the determinant, so that each element of the matrix becomes
\[ D_{x_i}^{2(k-j)}\phi_\mu (x_i)=\sum_{m=1}^L\mu _m^{2(k-j)}\ \frac{(x_i^{\mu_m}-x_i^{-\mu_m})}{\mu _m\prod_{n\neq m}(\mu _m^2-\mu _n^2)}.\]
Using the proof of \propref{SL} as a guide, we rewrite the RHS as
\[ (-1)^{\frac{(L-1)L}{2}+k(L+1)}\frac{ a_\mu^{(1)} a_\mu^{(k+1)}(x_1,\ldots ,x_k)}{\left ( a_\mu^{(1)}\right )^2},\]
and use the product formula of the denominator while using the determinant formula for the numerator, to get
\begin{align}
\label{eq:messdetk}&(-1)^{\frac{(L-1)L}{2}+k(L+1)}\frac{\dt{\begin{array}{c}\left[x_i^{\mu_j}-x_i^{-\mu_j}\right ]_{i\leq k}\\[4mm] \left[\mu _j^{2(L-i)+1}\right ]_{i>k}\end{array}}\dt{\mu _i^{2(L-j)+1}}}{\left (\prod_{m}\mu _m\prod_{n>m}(\mu _m^2-\mu _n^2)\right )^2}\nn\\[4mm]
&=(-1)^{k(L+1)}\dt{\begin{array}{c}\left[x_i^{\mu_j}-x_i^{-\mu_j}\right ]_{i\leq k}\\[4mm] \left[\mu _j^{2(L-i)+1}\right ]_{i>k}\end{array}}\dt{\frac{\mu _i^{2(L-j)+1}}{\mu _i^2\prod_{n\neq i}(\mu _i^2-\mu _n^2)}}\nn\\[4mm]
&=(-1)^{k(L+1)}\dt{\begin{array}{c}\left[\sum_{m=1}^L \frac{\mu _m^{2(L-j)}(x_i^{\mu_m}-x_i^{-\mu_m})}{\mu _m\prod_{n\neq m}(\mu _m^2-\mu _n^2)}\right ]_{i\leq k}\\[4mm] \left[\sum_{m=1}^L \frac{\mu _m^{4L-2(i+j)+2}}{\mu _m^2\prod_{n\neq m}(\mu _m^2-\mu _n^2)}\right ]_{i>k}\end{array}}.
\end{align}
The elements in rows $k$ to $L$ can be written, with $\eta=i+j$, as
\begin{align*}
&\frac1{\prod_{r<s}(\mu _r^2-\mu _s^2)}\sum_{m=1}^L (-1)^{m-1}\mu _m^{4L-2\eta}\prod_{\stack{1\leq r<s\leq L}{r,s\neq m}}(\mu _r^2-\mu _s^2)\\
&=\frac{\dt{\begin{array}{cc}\left[\mu_i^{4L-2\eta}\right ]_{j=1} & \left[\mu_i^{2(L-j)}\right ]_{j\geq 2}\end{array}}}{\dt{\mu _i^{2(L-j)}}},
\end{align*}
which is $0$ when $\eta>L+1$, and $1$ when $\eta=L+1$. This means that \eqref{eq:messdetk} can be expressed as
\[ (-1)^{k(L+1)}\dt{\begin{array}{cc}\left[*\right ]_{\stack{i\leq k}{j\leq L-k}} &\left[\sum_{m=1}\frac{\mu_m^{2(L-j)}(x_i^{\mu_m}-x_i^{-\mu_m})}{\mu_m\prod_{n\neq m}(\mu _m^2-\mu _n^2)}\right ]_{\stack{i\leq k}{j\geq L-k+1}}\\[6mm] \text{\bf A} &\textbf{0}_{\stack{i\geq k+1}{j\geq L-k+1}}\end{array}},\]
where `$*$' is an entry which does not contribute to the determinant, and \textbf{A} is a matrix with $1$s on the backwards diagonal, $0$s below and `$*$'s above. Then the expression can be reduced to
\begin{align*}
&\dt[k]{\sum_{m=1}^L \frac{\mu _m^{2(k-j)}(x_i^{\mu_m}-x_i^{-\mu_m})}{\mu _m\prod_{n\neq m}(\mu _m^2-\mu _n^2)}},
\end{align*}
which is exactly the LHS of \eqref{eq:redSk}.
\end{proof}

\section{Proof of \propref{A}}
\label{ap:Ak}
\begin{proof}
The proof of equation \eqref{eq:Aam} goes along similar lines as for \propref{Q}:
\begin{itemize}
\item[i.]
The first step is to expand the determinant $a^{(k+1)}_\mu$ over the $k$th row, giving
\be
\label{eq:sumA} \sum_{r=1}^L (-1)^{k+r}(y_k^{\mu_r}-y_k^{-\mu_r})\dt[L-1]{\begin{array}{c}
\left[y_i^{\mu_j}-y_i^{-\mu_j}\right ]_{i<k}\\[4mm]
\left[\mu_j^{2(L-i)+1}\right ]_{i>k}
\end{array}}_{j\neq r}.\ee
We can perform the integrals over $y_k$ in \eqref{eq:Aam}, and, as before, the dependence on $w$ factors out. 
\item[ii.] The $w$ integral can be evaluated.
\item[iii.] The remaining factor is combined with the determinant,
\begin{align*}
&\left[\left (\e^\varepsilon\prod_{l=1}^{k-1}\frac{t _l^2 x_l}{y_l}\right )^{\mu_r}-\left (\e^{-\varepsilon}\prod_{l=1}^{k-1}\frac{y_l}{t_l^2 x_l}\right )^{\mu_r}\right ]\dt[L-1]{\begin{array}{c}
\left[y_i^{\mu_j}-y_i^{-\mu_j}\right ]_{i<k}\\[4mm]
\left[\mu_j^{2(L-i)-1}\right ]_{k\leq i<L}
\end{array}}_{j\neq r}\\
&=\e^{\varepsilon\mu_r}\prod_{l=1}^{k-1}x_l^{\mu_r}\dt[L-1]{\begin{array}{c}
\left[t_i^{2\mu_r}(y_i^{\mu_j -\mu_r}-y_i^{-\mu_j -\mu_r})\right ]_{i<k}\\[4mm]
\left[\mu_j^{2(L-i)-1}\right ]_{k\leq i<L}
\end{array}}_{j\neq r}\\
&\quad -\e^{-\varepsilon\mu_r}\prod_{l=1}^{k-1}x_l^{-\mu_r}\dt[L-1]{\begin{array}{c}
\left[t_i^{-2\mu_r}(y_i^{\mu_j +\mu_r}-y_i^{-\mu_j +\mu_r})\right ]_{i<k}\\[4mm]
\left[\mu_j^{2(L-i)-1}\right ]_{k\leq i<L}
\end{array}}_{j\neq r}.
\end{align*}
Now, the $k-1$ integrals over $Y_i$ in \eqref{eq:Aam} can be inserted into the first $k-1$ rows of the determinants, and evaluated as before. The two determinants then become
\begin{align*}
&\e^{\pm\varepsilon\mu_r} \prod_{l=1}^{k-1}x_l^{\pm\mu_r}\dt[L-1]{\begin{array}{c}
\left[\frac1{\mu_j^2 -\mu_r^2}(x_{i+1}^{\mp\mu_r}(x_{i+1}^{\mu_j}-x_{i+1}^{-\mu_j})-x_i^{\mp\mu_r}(x_i^{\mu_j}-x_i^{-\mu_j}))\right ]_{i<k-1}\\[4mm]
\frac1{\mu_j^2 -\mu_r^2}(2\e^{\mp\varepsilon\mu_r}\sinh{\varepsilon\mu_j}-x_{k-1}^{\mp\mu_r}(x_{k-1}^{\mu_j}-x_{k-1}^{-\mu_j}))\\[4mm]
\left[\mu_j^{2(L-i)-1}\right ]_{k-1<i<L}
\end{array}}_{j\neq r}\\[4mm]
&=\frac{(-1)^{L-1}\e^{\pm\varepsilon\mu_r} \prod_{l=1}^{k-1}x_l^{\pm\mu_r}}{\prod_{j\neq r}(\mu_r^2 -\mu_j^2 )}\dt[L-1]{\begin{array}{c}
\left[x_{i+1}^{\mp\mu_r}(x_{i+1}^{\mu_j}-x_{i+1}^{-\mu_j})-x_i^{\mp\mu_r}(x_i^{\mu_j}-x_i^{-\mu_j})\right ]_{i<k-1}\\[4mm]
2\e^{\mp\varepsilon\mu_r}\sinh{\varepsilon\mu_j}-x_{k-1}^{\mp\mu_r}(x_{k-1}^{\mu_j}-x_{k-1}^{-\mu_j})\\[4mm]
\left[(\mu_j^2 -\mu_r^2 )\mu_j^{2(L-i)-1}\right ]_{k-1<i<L}
\end{array}}_{j\neq r}.
\end{align*}
Again, as in \eqref{eq:Qred2}, we will extract the ingredients needed for $\phi_\mu(x_k)$ and show that the rest is independent of $r$ so that it can be factored out of the sum in \eqref{eq:sumA}.
\end{itemize}

As in the previous section, we increase the size of each matrix by adding a row at $i=L$ and a column at $j=r$, introducing a factor of $(-1)^{L+r}$. The new column has entries equal to $0$ except at $i=L$, which equals $1$, and the new row has entries of $2\e^{\mp\varepsilon\mu_r}\sinh{\varepsilon\mu_j}$. We depict column $r$ at the end:
\[(-1)^{r+1}\e^{\pm\varepsilon\mu_r} \prod_{l=1}^{k-1}x_l^{\pm\mu_r}
\dt{\begin{array}{cc}
\left[(x_{i+1}^{\mp\mu_r}(x_{i+1}^{\mu_j}-x_{i+1}^{-\mu_j})-x_i^{\mp\mu_r}(x_i^{\mu_j}-x_i^{-\mu_j}))\right ]_{i<k-1, j\neq r} & \mathbf{0}\\[4mm]
\left[2\e^{\mp\varepsilon\mu_r}\sinh{\varepsilon\mu_j}-x_{k-1}^{\mp\mu_r}(x_{k-1}^{\mu_j}-x_{k-1}^{-\mu_j})\right ]_{j\neq r} & 0\\[4mm]
\left[(\mu _j^2-\mu _r^2)\mu _j^{2(L-i)-1}\right ]_{k-1<i<L, j\neq r} & \mathbf{0}\\[4mm]
\left[2\e^{\mp\varepsilon\mu_r}\sinh{\varepsilon\mu_j}\right ]_{j\neq r} & 1
\end{array}}\\[4mm]
\]

Now we use row reduction, subtracting the $L$th row from the $k$th and then adding the $(k-1)$th to the $k$th, the $(k-2)$th to the $(k-1)$th, etc. We then multiply the first $k$ rows by $-x_i^{\pm\mu_r}$, and the $L$th by $\e^{\pm\varepsilon\mu_r}$, and get
\[
(-1)^{k-1}\dt{\begin{array}{cc}
\left[(x_i^{\mu_j}-x_i^{-\mu_j}))\right ]_{i<k, j\neq r} & \left[x_i^{\pm\mu_r}\right]_{i<k}\\[4mm]
\left[(\mu _j^2-\mu _r^2)\mu _j^{2(L-i)-1}\right ]_{k\leq i<L, j\neq r} & \mathbf{0}\\[4mm]
\left[2\sinh{\varepsilon\mu_j}\right ]_{j\neq r} & \e^{\pm\varepsilon\mu_r}
\end{array}}.
\]
At this point we are able to combine the two determinants by subtracting the $r$th column of the second from the $r$th column of the first, obtaining
\[
\dt{\begin{array}{cc}
\left[(x_i^{\mu_j}-x_i^{-\mu_j}))\right ]_{i<k, j\neq r} & \left[(x_i^{\mu_r}-x_i^{-\mu_r}))\right ]_{i<k}\\[4mm]
\left[(\mu _j^2-\mu _r^2)\mu _j^{2(L-i)-1}\right ]_{k\leq i<L, j\neq r} & \mathbf{0}\\[4mm]
\left[2\sinh{\varepsilon\mu_j}\right ]_{j\neq r} & 2\sinh{\varepsilon\mu_r}
\end{array}}.
\]
Now we take the limit as $\varepsilon\rightarrow 0$ and approximate to first order, resulting in $2\varepsilon\mu_j$ in the bottom row. We can factor out $2\varepsilon$, and then use row reduction once again: To each row in turn from $i=L-1$ to $i=k$ we add $\mu_r^2$ times the row below, and we are finally left with
\begin{align*}
2\varepsilon\dt{\begin{array}{c}
\left[(x_i^{\mu_j}-x_i^{-\mu_j}))\right ]_{i<k}\\[4mm]
\left[\mu _j^{2(L-i)+1}\right ]_{i\geq k}
\end{array}}=2\varepsilon  a_\mu^{(k)}(x_1,\ldots ,x_{k-1}).
\end{align*}
This then factors out of the sum in \eqref{eq:sumA}, which is equal to $\phi_\mu(x_k)$. Putting everything together, the factors of $(-1)$ cancel out, as does the factor of $2\varepsilon$, and we are left with
\[
a^{(k)}_\mu (x_1,\ldots ,x_{k-1})\phi_\mu (x_k),
\]
which is the RHS of \eqref{eq:Aam}.
\end{proof}

\end{document}